\renewcommand\acknowledgments{\section*{Acknowledgements}}
\newtheorem{theorem}{Theorem}
\newtheorem{proposition}{Proposition}
\theoremstyle{definition}
\newtheorem{definition}{Definition}
\theoremstyle{remark}
\newtheorem{example}{Example}
\title{Homotopy Manin Theories: Generalising Third-Way, Yang--Mills and Integrable Sigma Models}
\author[a,b,c]{Alex S.~\textsc{Arvanitakis}\orcidlink{0000-0002-7844-5574},}
\author[d,e]{Leron \textsc{Borsten}\orcidlink{0000-0001-9008-7725},}
\author[a,b,d]{Dimitri \textsc{Kanakaris}\orcidlink{0009-0001-7716-851X},}
\author[d]{Hyungrok \textsc{Kim}\orcidlink{0000-0001-7909-4510}}
\affiliation[a]{\foreignlanguage{dutch}{Theoretische Natuurkunde, Vrije Universiteit Brussel, Pleinlaan 2, B-1050 Brussel}, Belgium}
\affiliation[b]{International Solvay Institutes, \foreignlanguage{dutch}{Pleinlaan 2, B-1050 Brussel}, Belgium}
\affiliation[c]{\foreignlanguage{croatian}{Institut Ruđer Bošković, Bijenička cesta 54, 10000 Zagreb}, Croatia}
\affiliation[d]{Department of Physics, Astronomy and Mathematics, University of Hertfordshire, Hatfield, Hertfordshire.\ AL10 9AB, United Kingdom}
\affiliation[e]{Blackett Laboratory, Imperial College London,
London SW7 2AZ, United Kingdom}
\emailAdd{alex.s.arvanitakis@irb.hr}
\emailAdd{l.borsten@herts.ac.uk}
\emailAdd{d.kanakaris-decavel@herts.ac.uk}
\emailAdd{h.kim2@herts.ac.uk}
\abstract{
Manin theories  are a class of non-topological deformations of Chern--Simons theories that naturally realise the third-way mechanism and furthermore admit localisation despite not being supersymmetric in the usual sense.
In this paper, we extend this construction to higher dimensions, thereby producing a large class of examples of third-way-type theories.  Furthermore, the construction naturally yields Yang--Baxter integrable deformations of the principal chiral model as well as gravitational models various dimensions.}
\begin{document}
\maketitle

\section{Introduction and summary}
The Alexandrov--Kontsevich--Schwarz--Zaboronsky (AKSZ) sigma model \cite{Alexandrov:1995kv} (reviewed in \cite{Roytenberg:2006qz,Cattaneo:2010re,Kotov:2010wr,Ikeda:2012pv,Kim:2018wvi}) provides a uniform construction of many Schwarz-type topological field theories in any dimension (such as Chern--Simons theory, \(BF\) model, topological Yang--Mills theory \(F\wedge F\), Poisson sigma model, etc.) and provides structural insights in terms of their gauge structure as a symplectic Lie \(n\)-algebroid. They appear in holography \cite{Vassilevich:2013ai,Pulmann:2020omk} and in double field theory \cite{Marotta:2021sia}.

This article extends the construction by deforming the AKSZ sigma models using a Dirac structure (or Manin pair).
This class of (non-topological) theories, which we call  Manin--AKSZ sigma models, provides a uniform construction of such diverse theories as (nonsupersymmetric) Yang--Mills theory, the third-way theories (a subsector of the ABJM model in a Romans background) \cite{Arvanitakis:2015oga,Deger:2021fvv,Deger:2022znj,Kanakaris:2023kfq} (reviewed in \cite{Deger:2021ojb}), the Manin theories \cite{Arvanitakis:2024dbu}, Yang--Baxter integrable deformations \cite{Klimcik:2002zj,Klimcik:2008eq} (reviewed in \cite{yoshida,Hoare:2021dix}) of the principal chiral model, and more.
In particular, this provides a way to construct theories that exhibit the third way mechanism \cite{Arvanitakis:2015oga,Deger:2021fvv,Deger:2022znj,Kanakaris:2023kfq}  using \(L_\infty\)-algebras and \(L_\infty\)-algebroids. Furthermore, many of this class of theories admit an interpretation as a gravitational theory coupled to backgrounds \cite{Borsten:2024pfz,Borsten:2024alh}. For prior   approaches to deforming AKSZ models to generate non-topological theories see, for example, \cite{Barnich:2010sw, Grigoriev:2010ic, Grigoriev:2012xg, Alkalaev:2013hta, Grigoriev:2016wmk, Pulmann:2019vrw, Grigoriev:2020xec} and the references therein.

\section{Geometric structures}
We recall the relevant notions of differential graded geometry in terms of which the Batalin--Vilkovisky formalism is naturally formulated, and define the notions of \emph{admissible subalgebras} and \emph{admissible fibrations} that define the geometry of homotopy Manin theories.

\subsection{Symplectic differential graded manifolds}
AKSZ models are naturally associated to symplectic differential graded manifolds, which we now review, fixing our conventions. For more comprehensive reviews see \cite{Ikeda:2012pv,Jurco:2018sby,Kraft:2022efy}. Our (graded) vector spaces are over the real numbers unless otherwise specified. We use the Koszul sign convention throughout. For a grading indexed by $i\in\mathbb Z$, $V=\bigoplus_i V_i$, suspensions are such that \((V[i])_j=V_{i+j}\). The symbol \(\odot\) denotes graded symmetrisation.

\begin{definition}
An \(L_\infty\)-algebra \((\mathfrak g,\{\mu_i\}_{i=1}^\infty)\) is a graded vector space \(\mathfrak g=\bigoplus_i \mathfrak g^i\) equipped with totally graded-antisymmetric multilinear maps
\begin{equation}
    \mu_i\colon\bigwedge^i\mathfrak g\to\mathfrak g
\end{equation}
of degree \(2-i\) such that
\begin{equation}
    \sum_{\mathclap{\substack{i+j=k\\\sigma\in\operatorname{Sym}(k)}}}\frac{(-1)^{ij}\chi(\sigma)}{i!j!}\mu_{j+1}\mleft(\mu_i(x_{\sigma(1)},\dotsc,x_{\sigma(i)}),x_{\sigma(i+1)},\dotsc,x_{\sigma(k)}\mright)=0.
\end{equation}
\end{definition}
Note that, if \(\mathfrak g\) is an \(L_\infty\)-algebra concentrated in nonnegative degrees, then \(\mathfrak g[1]\) is a differential graded manifold whose body is a single point.

\begin{definition}
A Lie \(n\)-algebroid \((X,Q)\) is a nonnegatively graded manifold \(X\)  concentrated in degrees \(\{-n,1-n,\dotsc,0\}\), together with a homological vector field \(Q\) on \(X\) of degree \(+1\).
\end{definition}

\begin{example}
A Lie 0-algebroid is the same thing as a smooth manifold (and \(Q=0\) necessarily).
\end{example}

\begin{example}
A Lie 1-algebroid \((X,Q)\) is the same thing as a Lie algebroid. Then \(X\) has the structure of a vector bundle \(E[1]\twoheadrightarrow|X|\).
With appropriate local coordinates \((x^i,\theta^a)\), then
\begin{equation}
    Q = \rho_a^i\theta^a\frac\partial{\partial x^i}
    -\frac12f^a{}_{bc}\theta^b\theta^c\frac\partial{\partial\theta^a},
\end{equation}
such that \(\rho\colon E\to\mathrm TX\) defines the anchor map, and \(f\) defines the bracket.
\end{example}

\begin{example}
A Lie \(n\)-algebroid \((X,Q)\) over a point is the same as an \(L_\infty\)-algebra \(\mathfrak g\) concentrated in degrees \(\{1-n,\dotsc,0\}\). Concretely, \(X=\mathfrak g[1]\), such that
\begin{equation}
    \mathcal C^\infty(X)=\bigodot\mathfrak g[1]^*,
\end{equation}
and then \(Q\colon\mathcal C^\infty(X)\to\mathcal C^\infty(X)\) is the Chevalley--Eilenberg differential. Given a basis \(t_a\) of \(\mathfrak g\),
\begin{equation}
    Q = -f^a\frac\partial{\partial t^a}
    - f^a{}_bt^b\frac\partial{\partial t^a}
    - \frac12f^a{}_{bc}t^bt^c\frac\partial{\partial t^a}
    - \dotsb,
\end{equation}
and \(f^a{}_{b_1b_2\dotso b_k}\) defines the structure constants for \(\mu_k\colon\mathfrak g^{\wedge k}\to\mathfrak g\).
\end{example}

\begin{definition}
A symplectic Lie \(n\)-algebroid consists of a Lie \(n\)-algebroid \((X,Q)\) equipped with a nondegenerate closed two-form \(\omega\in\Omega^2(X)\) of degree \(n+2\).\footnote{Differential forms on a graded manifold are bigraded by form degree and the inherent degree of the coordinates.}
\end{definition}
For example, a symplectic 0-algebroid is the same thing as a symplectic manifold. On a smooth manifold \(Y\), the degree-shifted tangent bundle \(\mathrm T[1]Y\) is a Lie (1-)algebroid, and a symplectic structure on it is the same thing as a Poisson structure; a Lie 1-algebroid on the one-point space \(\bullet\) is the same thing as a Lie algebra, and a symplectic strutcure on it is the same thing as an invariant nondegenerate metric.
A symplectic 2-algebroid is the same thing as a Courant algebroid. A symplectic \(n\)-algebroid on \(\bullet\) is the same as (the décalage\footnote{Here, décalage refers to constructing an \(L_\infty[1]\)-algebra from an \(L_\infty\)-algebra; see e.g.\ \cite{zbMATH06126056}.} of) a cyclic \(L_\infty\)-algebra in degrees \(\{1-n,\dotsc,1\}\).

The following proposition is standard.
\begin{proposition}
Given a symplectic \(n\)-algebroid \((X,Q,\omega)\) and a point \(x\in|X|\) in the body \(|X|\) of \(X\), then there exists a canonical cyclic \(L_\infty\)-algebra structure on the tangent space \(\mathrm T_x[-1]X\), where the cyclic structure is given by (the décalage of) \(\omega_x\) and the \(L_\infty\)-algebra structure is given by the Taylor expansion of \(Q\) near \(x\).
\end{proposition}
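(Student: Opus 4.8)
The plan is to linearise the entire structure at $x$ and read off both the brackets and the invariant pairing from the Taylor data of $Q$ and $\omega$, thereby reducing to the over-a-point case already described in the Example above. First I would centre local coordinates $\{z^A\}$ at $x$, so that $z^A(x)=0$, and invoke the graded Darboux theorem to arrange
\begin{equation}
  \omega = \frac12\,\omega_{AB}\,\mathrm dz^A\,\mathrm dz^B,\qquad \omega_{AB}\ \text{constant and nondegenerate.}
\end{equation}
Identifying the $z^A$ with a basis of $(\mathrm T_xX)^*$, the constant matrix $\omega_{AB}$ is precisely $\omega_x$; its décalage then equips $\mathfrak g:=\mathrm T_x[-1]X$ with a nondegenerate graded-symmetric pairing $\langle-,-\rangle$ of the degree appropriate to a cyclic $L_\infty$-algebra in degrees $\{1-n,\dotsc,1\}$.

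Next I would Taylor-expand the homological vector field. Writing $Q=Q^A(z)\,\partial/\partial z^A$ and
\begin{equation}
  Q^A(z)=\sum_{k\geq0}\frac1{k!}\,f^A{}_{B_1\dotso B_k}\,z^{B_1}\dotsm z^{B_k},
\end{equation}
I would declare the homogeneous coefficients $f^A{}_{B_1\dotso B_k}$ to be the structure constants of $\mu_k$ on $\mathfrak g$, exactly as in the Example for Lie $n$-algebroids over a point. Since $Q$ has degree $+1$, the shift $[-1]$ makes each $\mu_k$ a map of degree $2-k$, as required of the $L_\infty$-brackets. The $L_\infty$-relations are then equivalent to $Q^2=0$: expanding $Q^2=0$ in powers of $z$, the component of polynomial degree $k$ reproduces the $k$-th generalised Jacobi identity among the $\{\mu_i\}$. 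This is the standard correspondence between homological vector fields and (décalage-shifted) $L_\infty$-algebras, now applied in the formal neighbourhood of $x$.

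The symplectic structure enters only through cyclicity. Because the homological vector field preserves the symplectic form, $\mathcal L_Q\omega=0$ (the compatibility built into a symplectic Lie $n$-algebroid), and $\omega$ is constant in Darboux coordinates, the one-form $\iota_Q\omega$ is closed and hence, by the graded Poincaré lemma near $x$, exact: $\iota_Q\omega=\mathrm d\Theta$ for a Hamiltonian $\Theta$ of degree $n+1$. Lowering the upper index of $f^A{}_{B_1\dotso B_k}$ with $\omega_{AB}$ produces precisely the order-$(k+1)$ Taylor coefficients of $\Theta$. As ordinary iterated partial derivatives of a single function these coefficients are totally graded-symmetric, which is exactly the statement that $\langle\mu_k(x_1,\dotsc,x_k),x_{k+1}\rangle$ is cyclically graded-symmetric, i.e.\ that $\langle-,-\rangle$ is $L_\infty$-invariant. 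Hence $(\mathfrak g,\{\mu_k\},\langle-,-\rangle)$ is a cyclic $L_\infty$-algebra, with all data manifestly canonical once $\omega_x$ and the jet of $Q$ at $x$ are fixed.

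The main obstacle is bookkeeping rather than conceptual: one must track the décalage signs carefully when passing between the graded-antisymmetric brackets on $\mathrm T_x[-1]X$ and the graded-symmetric Hamiltonian $\Theta$, and verify the Hamiltonian property $\iota_Q\omega=\mathrm d\Theta$ (which uses $\mathcal L_Q\omega=0$ together with the graded Poincaré lemma, and fixes $\Theta$ up to a constant). One mild subtlety is the constant term: $f^A$ gives a curvature $\mu_0=Q|_x$, so in general one obtains a curved cyclic $L_\infty$-algebra, reducing to the uncurved case precisely when $x$ is a zero of $Q$.
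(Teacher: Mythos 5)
Your route is the standard one that the paper itself appeals to (the paper offers no proof, simply labelling the proposition ``standard''): graded Darboux coordinates making \(\omega\) constant, the Taylor coefficients of \(Q\) as structure constants with \(Q^2=0\) unpacking into the homotopy Jacobi identities, décalage to convert the degree-\(+1\) symmetric coefficients into brackets \(\mu_k\) of degree \(2-k\), and cyclicity via the Hamiltonian \(\Theta\) with \(\iota_Q\omega=\mathrm d\Theta\), whose graded-symmetric iterated derivatives are exactly the \(\omega\)-lowered structure constants. All of this is correct, including your observation that cyclicity requires the compatibility \(\mathcal L_Q\omega=0\): note that the paper's definition of a symplectic Lie \(n\)-algebroid literally asks only that \(\omega\) be closed and nondegenerate, so you are (rightly) using a compatibility condition the paper leaves implicit.

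The genuine gap is in your final paragraph. The proposition asserts an honest, uncurved cyclic \(L_\infty\)-algebra for \emph{every} \(x\in|X|\), whereas your proof as written delivers one only when \(x\) is a zero of \(Q\), and you present this as a real restriction (``in general one obtains a curved cyclic \(L_\infty\)-algebra''). The missing observation --- and the only place where the hypothesis that \(x\) lies in the \emph{body} is actually used --- is that every point of the body is automatically a zero of \(Q\), for degree reasons: writing \(Q=Q^A\,\partial/\partial z^A\), the component \(Q^A\) is a function of degree \(|z^A|+1\ge1\), and any function of strictly positive degree is a sum of monomials each containing at least one coordinate of nonzero degree, hence vanishes upon restriction to the body. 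Therefore \(f^A=Q^A(x)=0\), so \(\mu_0=0\) identically and no curvature ever appears. Without this remark your argument proves a strictly weaker statement than the one claimed; with it, your proof is complete (modulo the usual caveat, shared by the proposition itself, that the higher Taylor coefficients are canonical only up to the choice of coordinates, i.e.\ up to \(L_\infty\)-isomorphism).
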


\subsection{Admissible subalgebras}
The appropriate gauge structures for Manin theories in arbitrary dimensions are \(L_\infty\)-algebras equipped with an isotropic \emph{admissible subalgebra}. The admissible subalgebra defines the gauge subalgebra that is left unbroken by the mass term. This notion reduces to that of a Manin pair for Lie algebras.

\begin{definition}
Let \(\mathfrak d\) be an \(L_\infty\)-algebra with a cyclic pairing of degree \(d-3\) of split signature. An \emph{admissible subalgebra} of \(\mathfrak d\) is a homogeneous Lagrangian subspace \(\mathfrak g\subset\mathfrak d\) that satisfies the following three conditions:
\begin{enumerate}
\item For any \(1\le j\le i\) and \(a_1+\dotsb+a_j\le j-2\), we have
\begin{equation}\label{eq:cond1}
\mu_i(\mathfrak g^{a_1},\dotsc,\mathfrak g^{a_j},\mathfrak d,\dotsc,\mathfrak d)\subset\mathfrak g.
\end{equation}
\item For any \(0\le j\le i-1\) and \(a\ge3-d\) and \(b_1+\dotsb+b_j\le j-2\), we have
\begin{equation}\label{eq:cond2}
\mu_i(\mathfrak g^a,\mathfrak g^{b_1},\dotsc,\mathfrak g^{b_j},\mathfrak d,\dotsc,\mathfrak d)=0.
\end{equation}
\item For any \(a\ge3-d\), \(a+b_1+\dotsb+b_{i-1}\ge2+i-d\), we have
\begin{equation}\label{eq:cond3}
\mu_i(\mathfrak g^a,\mathfrak g^{b_1},\dotsc,\mathfrak g^{b_{i-1}})\subset\mathfrak g.
\end{equation}
\end{enumerate}
\end{definition}

\begin{example}
In a Lie algebra \(\mathfrak g\) (regarded as an \(L_\infty\)-algebra concentrated in degree zero), a linear subspace \(\mathfrak h\subset\mathfrak g\) is an admissible subalgebra if and only if it is a Lie subalgebra.
\end{example}

\begin{example}\label{example:1d_admissible_subalgebra}
Given an vector space \(V\) equipped with a split-signature inner product \(\langle-,-\rangle\),
then \(\mathfrak d\coloneqq V[-1]\) may be regarded as a \(L_\infty\)-algebra with \(\mu_i=0\) for all \(i\) and a cyclic structure of degree \(-2=1-3\). Then an admissible subalgebra is the same as (the degree shift of) a  Lagrangian subalgebra \(L[-1]\subset V[-1]\).
\end{example}

Given a manifold \(\Sigma\) and an \(L_\infty\)-algebra \(\mathfrak d\) the tensor product \(\Omega(M)\otimes\mathfrak d\) carries the structure of an \(L_\infty\)-algebra\footnote{Since the de~Rham algebra \(\Omega(\Sigma)\) carries the structure of a graded-commutative associative algebra.}, with products  given by
\begin{equation}
    \mu_i^{\Omega(M)\otimes\mathfrak d}(\alpha_1\otimes x_1,\dotsc,\alpha_i\otimes x_i)
    \coloneqq\pm(\alpha_1\wedge\dotsb\wedge\alpha_i)\otimes\mu_i^{\mathfrak d}(x_1,\dotsc,x_i)
    + \delta_{i1}\mathrm d\alpha_1\otimes x_i
\end{equation}
for every homogeneous \(\alpha_1,\dotsc,\alpha_i\in\Omega(\Sigma)\) and \(x_1,\dotsc,x_i\in\mathfrak d\),  where \(\pm\) is the Koszul sign corresponding to transpositions of \(\alpha_1,\dotsc,\alpha_i\) and \(x_1,\dotsc,x_i\). Furthermore, it is cyclic if \(\mathfrak d\) is cyclic.
This is, however, too large for homotopy Manin theories, where we wish to kill half of the gauge symmetry (into \(\mathfrak g\)-valued ghosts rather than \(\mathfrak d\)-valued ghosts for an admissible subalgebra \(\mathfrak g\subset\mathfrak d)\)). This breaking of the gauge symmetry is what will ultimately yield propagating degrees of freedom, generalising the construction introduced in \cite{Arvanitakis:2024dbu}. An appropriately smaller \(L_\infty\)-algebra is provided by the following theorem.

\begin{theorem}
Let \(\Sigma\) be a \(d\)-dimensional compact oriented manifold, and \(\mathfrak d\) be a cyclic \(L_\infty\)-algebra concentrated in degrees \(\{2-d,\dotsc,0,1\}\), such that one can construct the \(L_\infty\)-algebra \(\Omega(\Sigma)\otimes\mathfrak g\).
Let \(\mathfrak g\) be an admissible subalgebra of \(\mathfrak d\). Then the following holds.
\begin{enumerate}
\item the homogeneous subspace \(\tilde\Omega(\Sigma;\mathfrak d,\mathfrak g)\subset\Omega(\Sigma)\otimes\mathfrak d\) given by
\begin{equation}
    \tilde\Omega^i(\Sigma;\mathfrak d,\mathfrak g)\coloneqq\begin{cases}
        \bigoplus_{p+q=i}\Omega^p(\Sigma;\mathfrak g^q)& i\le0\\
        \bigoplus_{p+q=i}\Omega^p(\Sigma;\mathfrak d^q)& i\ge1
    \end{cases}
\end{equation}
is an \(L_\infty\)-subalgebra of \(\Omega(\Sigma)\otimes\mathfrak d\).
\item The homogeneous subspace \(\check\Omega^i(\Sigma;\mathfrak g)\subset\tilde\Omega(\Sigma;\mathfrak d,\mathfrak g)\) given by
\begin{equation}
    \check\Omega^i(\Sigma;\mathfrak g)\coloneqq\begin{cases}
        0& i\le2\\
        \bigoplus_{p+q=i}\Omega^p(M;\mathfrak g^q)& i\ge3
    \end{cases}
\end{equation}
is an \(L_\infty\)-ideal of \(\tilde\Omega(\Sigma;\mathfrak d,\mathfrak g)\).
\end{enumerate}
Thus, the subquotient
\begin{equation}
    \hat\Omega(\Sigma;\mathfrak d,\mathfrak g)\coloneqq \tilde\Omega(\Sigma;\mathfrak d,\mathfrak g)/\check\Omega(\Sigma;\mathfrak g)
\end{equation}
exists as an \(L_\infty\)-algebra.
\end{theorem}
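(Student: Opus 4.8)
The plan is to reduce everything to the behaviour of the internal bracket \(\mu_i^{\mathfrak d}\). The products on \(\Omega(\Sigma)\otimes\mathfrak d\) factorise into a wedge product on forms and \(\mu_i^{\mathfrak d}\) on \(\mathfrak d\) (together with the de~Rham differential sitting inside \(\mu_1\)), and the wedge/de~Rham pieces never leave \(\Omega(\Sigma)\). Hence whether a bracket of elements of \(\tilde\Omega\) (resp.\ \(\check\Omega\)) again lands in \(\tilde\Omega\) (resp.\ \(\check\Omega\)) is controlled entirely by (a) its total degree and (b) whether its \(\mathfrak d\)-value lies in \(\mathfrak g\). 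The one structural input I will use repeatedly is the degree dictionary: a nonzero homogeneous element of total degree \(i\) has form degree \(p\le d\) and internal degree \(q=i-p\), so \(q\ge i-d\). This is exactly what converts the cutoffs \(i\le0\) (defining \(\tilde\Omega\)) and \(i\ge3\) (defining \(\check\Omega\)) into the internal-degree bounds \(q\le0\) and \(q\ge3-d\) appearing in the admissibility conditions.

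For part~1 I only need to check closure in output total degrees \(\le0\), since in degrees \(\ge1\) the space \(\tilde\Omega\) already coincides with \(\Omega(\Sigma)\otimes\mathfrak d\), which is a subalgebra by hypothesis. So suppose \(\mu_n^{\mathfrak d}(x_1,\dotsc,x_n)\) has output total degree \(I\le0\); by the dictionary its internal degree is \(Q\le0\). Let \(s\) be the number of inputs of total degree \(\le0\): these are precisely the \(\mathfrak g\)-valued ones, and each has internal degree \(\le0\). If \(s\ge2\) their degree sum is \(\le0\le s-2\), so condition~\eqref{eq:cond1} with \(j=s\) puts the output in \(\mathfrak g\); if \(s=1\) the single \(\mathfrak g\)-input is forced to internal degree \(\le-1\) by \(I\le0\), so condition~\eqref{eq:cond1} with \(j=1\) applies; and \(s=0\) forces \(I\ge2\), a contradiction. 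The de~Rham term in \(\mu_1\) is handled identically.

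For part~2, the inclusion \(\check\Omega\subset\tilde\Omega\) is immediate, and by part~1 any bracket with inputs in \(\tilde\Omega\) already lands in \(\tilde\Omega\); so I must only check that bracketing one distinguished \(\check\Omega\)-input — of total degree \(\ge3\), hence \(\mathfrak g\)-valued of internal degree \(a\ge3-d\) — against \(\tilde\Omega\)-inputs produces \(\check\Omega\). If the output total degree is \(\le2\) it must vanish: at least one further \(\mathfrak g\)-valued input is present (an all-\(\mathfrak d\) configuration would push the total degree \(\ge4\)), and the bound \(I\le2\) constrains the \(\mathfrak g\)-input degrees enough to invoke condition~\eqref{eq:cond2}, which kills the bracket. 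If the output total degree is \(\ge3\) it must stay \(\mathfrak g\)-valued; here I use \(\mathfrak g=\mathfrak g^\perp\), so it suffices that \(\langle\mu_n^{\mathfrak d}(x_1,\dotsc,x_n),v\rangle=0\) for all \(v\in\mathfrak g\). Cyclicity of the pairing rewrites this as \(\pm\langle x_1,\mu_n^{\mathfrak d}(x_2,\dotsc,x_n,v)\rangle\); the transferred bracket has \(v\) adjoined as an extra low-degree \(\mathfrak g\)-input, and the dictionary (using \(Q\ge3-d\) for a nonzero output) forces its \(\mathfrak g\)-input degrees into the range covered by conditions~\eqref{eq:cond1} and~\eqref{eq:cond3}, so that \(\mu_n^{\mathfrak d}(x_2,\dotsc,x_n,v)\in\mathfrak g\); Lagrangianness of \(\mathfrak g\) then makes the pairing with \(x_1\in\mathfrak g\) vanish.

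The main obstacle is precisely this degree bookkeeping at the thresholds: the saturated cases — typically a top-degree-form output of total degree exactly \(3\) — are not covered by any single subalgebra condition in isolation, and it is here that the cyclic pairing and the Lagrangian property \(\mathfrak g=\mathfrak g^\perp\) (rather than conditions~\eqref{eq:cond1}--\eqref{eq:cond3} directly) do the work, through the transfer argument above. Once both parts are in hand, \(\tilde\Omega(\Sigma;\mathfrak d,\mathfrak g)\) is an \(L_\infty\)-subalgebra and \(\check\Omega(\Sigma;\mathfrak g)\) an \(L_\infty\)-ideal inside it, so the subquotient \(\hat\Omega(\Sigma;\mathfrak d,\mathfrak g)=\tilde\Omega/\check\Omega\) inherits an \(L_\infty\)-structure by the standard fact that an \(L_\infty\)-algebra modulo an \(L_\infty\)-ideal is again an \(L_\infty\)-algebra.
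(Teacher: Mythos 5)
Your proof is correct, and for part 1 and for the vanishing half of part 2 it coincides in substance with the paper's own argument: both reduce everything to the internal bracket via the fact that form degrees are bounded below by $0$ and above by $d$, so that the total-degree cutoffs defining $\tilde\Omega$ and $\check\Omega$ translate into exactly the internal-degree hypotheses of \eqref{eq:cond1} and \eqref{eq:cond2}. The paper phrases this as ``set all form degrees to their minimum values'', you phrase it as direct inequalities; these are equivalent. Where you genuinely diverge is the remaining half of part 2, showing that an output of total degree $\ge3$ stays $\mathfrak g$-valued. The paper handles this by asking for which internal degrees there exists a form-degree assignment that is simultaneously realisable (total form degree $\le d$) and dangerous (output total degree $\ge3$), and matches the resulting inequality against \eqref{eq:cond3}. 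You instead exploit the cyclic pairing: since $\mathfrak g=\mathfrak g^\perp$, you pair the output against an arbitrary $v\in\mathfrak g$, move $v$ into the bracket by cyclicity, dispose of the transferred bracket using \eqref{eq:cond1}, and let isotropy of $\mathfrak g$ kill the pairing against the distinguished $\mathfrak g$-valued argument. This is a genuinely different mechanism with a real payoff: it shows that, given nondegeneracy of the pairing and Lagrangianness of $\mathfrak g$, condition \eqref{eq:cond1} already implies what \eqref{eq:cond3} is designed for (your transfer, run on an all-$\mathfrak g$ configuration, literally derives \eqref{eq:cond3} from \eqref{eq:cond1} whenever the first argument has degree $\ge 3-d$); it also sidesteps the awkward point that \eqref{eq:cond3} as stated takes all arguments in $\mathfrak g$, whereas the brackets that must be controlled in the ideal property contain $\mathfrak d$-valued arguments, a mismatch the paper's appeal to \eqref{eq:cond3} glosses over. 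The cost is that your route uses the cyclic structure in an essential way, while the paper's argument is meant to be purely bracket-theoretic.

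One threshold in your sketch must be made explicit, since as written it does not quite close. From ``output total degree $\ge3$ and output form degree $\le d$'' you get $Q\ge3-d$, hence $|v|=3-d-Q\le0$ for the adjoined input. This suffices when at least one non-distinguished $\mathfrak g$-valued input is present: the transferred bracket then has $j+1\ge2$ inputs in $\mathfrak g$ with degree sum $\le0\le(j+1)-2$, so \eqref{eq:cond1} applies. But when the distinguished $\check\Omega$-input is the \emph{only} $\mathfrak g$-valued one, $v$ is the sole $\mathfrak g$-input of the transferred bracket, and \eqref{eq:cond1} with a single $\mathfrak g$-argument demands $|v|\le-1$, which $Q\ge3-d$ does not deliver. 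The repair is your own degree dictionary: in that configuration every other input has total degree $\ge1$, so the output total degree is at least $(2-n)+3+(n-1)=4$, whence $Q\ge4-d$ and $|v|\le-1$ after all. With this case split added, both halves of part 2 are proved, and the final quotient step is, as you say, the standard fact about quotients by $L_\infty$-ideals.
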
\begin{proof}[Proof of 1.]
We must show that \(\tilde\Omega(\Sigma;\mathfrak d,\mathfrak g)\subset\Omega(\Sigma)\otimes\mathfrak d\) is closed under \(\mu_i^{\Omega(\Sigma)\otimes\mathfrak d}\) (which we simply write as \(\mu_i\) below).
\begin{itemize}
\item Closure under \(\mu_1\): we must check that, for \(\alpha\otimes x\) with \(\alpha\in\Omega^p(\Sigma)\) and \(x\in\mathfrak g^q\) and \(p+q<1\), that \(\mathrm d\alpha\otimes x+(-1)^p\alpha\otimes\mu^{\mathfrak d}_1(x)\in\tilde\Omega(\Sigma;\mathfrak d,\mathfrak g)\). Now, \(\mathrm d\alpha\otimes x\in\Omega^{p+1}(\Sigma;\mathfrak g^q)\subset\tilde\Omega^{p+q+1}(\Sigma;\mathfrak d,\mathfrak g)\). As for the other term, the argument in the case \(\mu_i\) for general \(i\) applies verbatim.
\item Given the above, it suffices to show that
\begin{multline}
\mu_i\mleft(\Omega^{p_1}(\Sigma;\mathfrak g_{-a_1}),\dotsc,\Omega^{p_j}(\Sigma;\mathfrak g_{-a_j}),\Omega^{q_1}(\Sigma;\mathfrak d_{-b_1}),\dotsc,\Omega^{q_{i-j}}(\Sigma;\mathfrak d_{-b_{i-j}})\mright)\\
\in\Omega(\Sigma)\otimes\mathfrak g
\end{multline}
whenever \(p_1-a_1,\dotsc,p_j-a_j\le0\) and \(q_1-b_1,\dotsc,q_{i-j}-b_{i-j}\ge1\) and
\begin{equation}(2-i)+(p_1-a_1)+\dotsb+(p_j-a_j)+(q_1-b_1)+\dotsb+(q_{i-j}-b_{i-j})\le0.\end{equation}
It thus follows that we can without loss of generality set the form degrees to their minimum values, that is, to ensure
\begin{multline}
\mu_i\mleft(\Omega^0(\Sigma;\mathfrak g_{-a_1})\otimes\dotsc\otimes\Omega^0(\Sigma;\mathfrak g_{-a_j})\otimes\Omega^{1+b_1}(\Sigma;\mathfrak d_{-b_1})\otimes\dotsb\otimes\Omega^{1+b_{i-j}}(\Sigma;\mathfrak d_{-b_{i-j}})\mright)\\
\in\Omega(\Sigma)\otimes\mathfrak g
\end{multline}
whenever
\begin{equation}(2-i)-a_1-\dotsb-a_j+(i-j)\le0.\end{equation}
But this follows from \eqref{eq:cond1}.\qedhere
\end{itemize}
\end{proof}
\begin{proof}[Proof of 2.]
We must check the ideal condition for \(\mu_i^{\Omega(\Sigma)\otimes\mathfrak d}\), that is, to show that the value of \(\mu_i^{\Omega(\Sigma)\otimes\mathfrak d}\) lies in \(\check\Omega(\Sigma;\mathfrak g)\) whenever at least one of its argument belongs to \(\check\Omega(\Sigma;\mathfrak g)\) and the rest (if any) belong to \(\tilde\Omega(\Sigma;\mathfrak d,\mathfrak g)\). As before, we simply write \(\mu_i\) for \(\mu_i^{\Omega(\Sigma)\otimes\mathfrak d}\).

The operator \(\mu_1(\alpha\otimes x)=\mathrm d\alpha\otimes x+(-1)^{|\alpha|}\alpha\otimes\mu_1^{\mathfrak d}(x)\) contains two terms, among which 
the first term always increases form degree and hence can pose no problem.
Hence it suffices to ensure that
\begin{multline}
    \mu_i(\Omega^p(\Sigma;\mathfrak g^{-a}),
    \Omega^{q_1}(\Sigma;\mathfrak g^{-b_1}),
    \dotsc,
    \Omega^{q_j}(\Sigma;\mathfrak g^{-b_j}),
    \dotsc,
    \Omega^{r_1}(\Sigma;\mathfrak d^{-c_1}),\dotsc,\Omega^{r_{i-j-1}}(\Sigma;\mathfrak d^{-x_{i-1}}))\\
    \in\check\Omega(\Sigma;\mathfrak g)\label{eq:to_be_shown}
\end{multline}
whenever \(p-a\ge3\)
and \(r_1-c_1,\dotsc,r_{i-j-1}-c_{i-j-1}\ge1\).
There are two ways in which \eqref{eq:to_be_shown} could fail: (a) the total degree might dip to \(<3\) so that we fall out of \(\check\Omega(\Sigma;\mathfrak d,\mathfrak g)\); (b) the \(\mu_i^{\mathfrak d}\) might make me fall out of \(\check\Omega(\Sigma;\mathfrak d,\mathfrak g)\) in that we end up in \(\tilde\Omega(\Sigma;\mathfrak d,\mathfrak g)\setminus\Omega(\Sigma;\mathfrak g)\).
To avoid (a), it suffices without loss of generality to consider the case when all form degrees are minimised, that is, to ensure
\begin{multline}
    \mu_i(\Omega^{3-a}(\Sigma;\mathfrak g^{-a}),
    \Omega^0(\Sigma;\mathfrak g^{-b_1}),\dotsc,
    \Omega^0(\Sigma;\mathfrak g^{-b_j}),
    \Omega^{c_1+1}(\Sigma;\mathfrak d^{-c_1}),\dotsc,\Omega^{c_{i-j-1}+1}(\Sigma;\mathfrak d^{-c_{i-1}}))\\
    =0
\end{multline}
whenever
\begin{equation}
    (2-i)+3-b_1-\dotsb-b_j+(i-j-1)\le2.
\end{equation}
But this follows from \eqref{eq:cond2}.
To avoid (b), we need to ensure that
\begin{multline}
    \mu_i(\Omega^p(\Sigma;\mathfrak g^{-a}),
    \Omega^{q_1}(\Sigma;\mathfrak g^{-b_1}),
    \dotsc,
    \Omega^{q_j}(\Sigma;\mathfrak g^{-b_j}),
    \dotsc,
    \Omega^{r_1}(\Sigma;\mathfrak d^{-c_1}),\dotsc,\Omega^{r_{i-j-1}}(\Sigma;\mathfrak d^{-x_{i-1}}))\\
    \in\Omega(\Sigma;\mathfrak g)
\end{multline}
whenever \(p-a\ge3\) and \(r_1-c_1,\dotsc,r_{i-j-1}-c_{i-j-1}\ge1\) and
\begin{equation}
    (2-i)+(p-a)+(q_1-b_1)+\dotsb+(q_j-b_j)+(r_1-c_1)+\dotsb+(r_{i-j-1}-c_{i-j-1})\ge3
\end{equation}
and
\begin{equation}
    p+q_1+\dotsb+q_j+r_1+\dotsb+r_{i-j-1}\le d.
\end{equation}
That is, given \((-a,-b_1,\dotsc,-b_j,-c_1,\dotsc,-c_{i-j-1})\), if we can choose \((p,q,r)\) so that the above inequality is satisfied, then the corresponding \(\mu_i^{\mathfrak d}\) has to take values in \(\mathfrak g\).
\begin{equation}
\begin{aligned}
    \MoveEqLeft 1+i+a+(b_1-q_1)+\dotsb+(b_j-q_j)+(c_1-r_1)+\dotsb+(c_{i-j-1}-r_{i-j-1})\\
    &\le p\\&\le d-q_1-\dotsb-q_j-r_1-\dotsb-r_{i-j-1}.
\end{aligned}
\end{equation}
Such a \(p\) exists iff
\begin{equation}
    a+b_1+\dotsb+b_j+c_1+\dotsb+c_{i-j-1}\le d-1-i.
\end{equation}
But now we see that the condition \eqref{eq:cond3} is the condition that we need.
\end{proof}

\begin{definition}
    Given a multiset of nonnegative integers \(S\), property (A) holds iff there is at most one element of \(S\) is positive:
    \begin{equation}
        \text{(A)}\iff \#\{s\in S\mid |s|>0\} \le 1.
    \end{equation}
    (Here \(\#\) denotes the cardinality of a multiset.)
    Property (B) holds iff all but two elements of \(S\) are zero, and the two remaining elements are equal:
    \begin{equation}
        \text{(B)}\iff \exists p\in\mathbb N\colon S = \{0,\dotsc,0\}\sqcup\{p,p\}.
    \end{equation}
    (Properties (A) and (B) hold simultaneously iff \(S\) only contains zeros.)
\end{definition}

\begin{definition}
A \emph{Hodge structure} on an admissible subalgebra \(\mathfrak g\) of a cyclic \(L_\infty\)-algebra \(\mathfrak d\) (where \(\mathfrak d\) carries cyclic structure of degree \(3-d\)) consists of nondegenerate symmetric (\emph{not} graded-symmetric) bilinear pairings
\begin{equation}
    \varkappa^{(i)}\colon(\mathfrak d^i/\mathfrak g^i)\otimes(\mathfrak d^i/\mathfrak g^i)\to\mathbb R
\end{equation}
for each degree \(i\), which we can regard as linear maps
\begin{equation}
    M^{(i)}\colon\mathfrak d^i\to\mathfrak d^{d-3-i}
\end{equation}
with
\begin{equation}\operatorname{im}M^{(i)}=\mathfrak g^i=\ker M^{(i)}\qquad(i\le0),\end{equation}
and that is cyclic in that
\begin{equation}\label{eq:hodge_cyclicity}
    \langle Mx,y\rangle = \langle x,My\rangle
\end{equation}
for any \(x,y\in\mathfrak d\), such that given \(a_1,\dotsc,a_i\in\{2-d,\dotsc,1\}\) and \(p_1,\dotsc,p_i\in\{0,\dotsc,d\}\),
\begin{itemize}
\item if neither (A) nor (B) hold for the multiset \(\{p_1,\dotsc,p_i\}\), then
\begin{align}\label{eq:generic_case}
X^{a_1,p_1;\dotsb;a_i,p_i}&\in V&
Y_k^{a_1,p_1;\dotsb;a_i,p_i}&\in V
\end{align}
for all \(k\in\{1,\dotsc,i\}\);
\item if only (A) holds, then
\begin{equation}\label{eq:cond_(A)_case}
    X^{a_1,p_1;\dotsb;a_i,p_i}+Y_k^{a_1,p_1;\dotsb;a_i,p_i}\in V
\end{equation}
for all \(k\in\{1,\dotsc,i\}\) and \(p_k\ge1\);
\item if only (B) holds, then
\begin{equation}\label{eq:cond_(B)_case}
    Y_k^{a_1,p_1;\dotsb;a_i,p_i}+(-1)^{p_k(d-p_k)}Y_l^{a_1,p_1;\dotsb;a_i,p_i}\in V
\end{equation}
for all \(k,l\in\{1,\dotsc,i\}\) with \(p_k=p_l\ge1\);
\item if both (A) and (B) hold (i.e.\ \(p_1=\dotsb=p_i=0\)) , then
\begin{equation}\label{eq:cond_(AB)_case}
X^{a_1,p_1;\dotsb;a_i,p_i}+\sum_{k=1}^iY_k^{a_1,p_1;\dotsb;a_i,p_i}\in V.
\end{equation}
\end{itemize}
In the above,
\begin{multline}
    X^{a_1,p_1;\dotsb;a_i,p_i}
    \coloneqq\\\begin{cases}
        M(\mu_i(\tilde{\mathfrak g}^{a_1},\dotsc,\tilde{\mathfrak g}^{a_i})) &\text{if \(p_1+\dotsb+p_i\le d \text{ and }a_1+p_1+\dotsb+a_i+p_i=i-1\)} \\
        0 &\text{otherwise}
    \end{cases}
\end{multline}
and
\begin{multline}
    Y_k^{a_1,p_1;\dotsb;a_i,p_i}
    \coloneqq\\\begin{cases}
        \mu_i(\tilde{\mathfrak g}^{a_1},\dotsc,M(\tilde{\mathfrak g}^{a_k})\dotsc,\tilde{\mathfrak g}^{a_i}) & \text{if \(p_1+\dotsb+p_i\le2p_k \text{ and }a_k+p_k=1\)}\\
        0 &\text{otherwise}
    \end{cases}
\end{multline}
where \(M\) is applied to the \(k\)th argument (\(k\in\{1,\dotsc,i\}\)), and in which
\begin{equation}
    \tilde{\mathfrak g}^{a_k}\coloneqq\begin{cases}
        \mathfrak g^{a_k} & a_k+p_k\le 0\\
        \mathfrak d^{a_k} & a_k+p_k\ge1
    \end{cases}
\end{equation}
and
\begin{equation}
    V =\begin{cases}
    \mathfrak g & \text{if \(a_1+p_1+\dotsb+a_i+p_i\ge i\)} \\
0 & \text{otherwise}.
\end{cases}
\end{equation}
\end{definition}
\begin{example}
Let \(\mathfrak d\) be a cyclic Lie algebra with a Lagrangian subalgebra \(\mathfrak g\subset\mathfrak d\). Then a Hodge structure on \((\mathfrak d,\mathfrak g)\) is a linear map \(M\colon\mathfrak d\to\mathfrak d\) with \(\operatorname{im}M=\mathfrak g=\ker M\) such that
\begin{equation}
    \langle Mx,y\rangle = \langle x,My\rangle 
\end{equation}
for any \(x,y\in\mathfrak d\) and
\begin{align}
    M[x,y]=[x,My]
\end{align}
for any \(x\in\mathfrak g\) and \(y\in\mathfrak d\).
\end{example}

\begin{example}
Let \(\mathfrak g\) be an \(L_\infty\)-algebra with cyclic pairing of degree \(3-n\). Then
\begin{equation}\mathfrak d=\mathfrak g\oplus\mathfrak g^*\end{equation}
forms a Manin pair. A Hodge structure consists of suitable (anti-)symmetric pairings on each of \(\mathfrak g_i\).
\end{example}

\begin{theorem}
Let \(\Sigma\) be a \(d\)-dimensional compact oriented Riemannian manifold, and \(\mathfrak d\) be a cyclic \(L_\infty\)-algebra concentrated in degrees \(\{2-d,\dotsc,-1,1\}\) with an admissible subalgebra \(\mathfrak g\). Let \(M\) be a Hodge structure on \((\mathfrak d,\mathfrak g)\). Then the \(L_\infty\)-algebra structure \(\mu_i^{\hat\Omega^i(\Sigma;\mathfrak d,\mathfrak g)}\) on \(\hat\Omega(\Sigma;\mathfrak d,\mathfrak g)\) admits a one-parameter deformation
\begin{align}
    \mu_1^M&\coloneqq \mu^{\hat\Omega(\Sigma;\mathfrak d,\mathfrak g)}_1+t\nu&
    \mu_i^M&\coloneqq\mu^{\hat\Omega(\Sigma;\mathfrak d,\mathfrak g)}_i\qquad\forall i\ge2
\end{align}
where \(t\) is an arbitrary real number and
\begin{equation}
    \nu(\alpha\otimes x)\coloneqq\begin{cases}
        \star\alpha\otimes M(x) & \text{if \(|x|+|\alpha|=1\)}\\
        0 & \text{otherwise}
    \end{cases}
\end{equation}
for any  \(\alpha\in\Omega(\Sigma)\) and \(x\in\mathfrak d\), where \(\star\) is the Hodge star on \(\Omega(M)\). Furthermore, this deformation is cyclic.\footnote{It is possible to generalise to topological deformations that do not require the Hodge star \cite{Borsten:2024alh}. We leave this analysis for future work.}
\end{theorem}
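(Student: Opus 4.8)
The plan is to exploit that the deformation touches only the unary bracket. I recast the undeformed \(L_\infty\)-structure on \(\hat\Omega(\Sigma;\mathfrak d,\mathfrak g)\) as a square-zero degree-\((+1)\) coderivation \(D\) on the cofree cocommutative coalgebra over \(\hat\Omega(\Sigma;\mathfrak d,\mathfrak g)[1]\), and the deformation as \(D\mapsto D+tN\), where \(N\) is the coderivation cogenerated by the unary map \(\nu\). Since \((D+tN)^2=D^2+t\,[D,N]+t^2N^2\) with no higher powers, and \(D^2=0\) by the preceding theorem, it suffices to establish \([D,N]=0\) and \(N^2=0\). The latter is immediate: \(\nu\) vanishes except on total degree \(1\) and raises degree by one, so \(\nu^2=0\) for degree reasons (its algebraic shadow being \(M^2=0\), which follows from \(\operatorname{im}M=\mathfrak g=\ker M\)); as \(N\) is cogenerated by a unary map, \(N^2\) is cogenerated by \(\nu^2=0\). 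Thus the entire content is the cocycle condition \([D,N]=0\), i.e.\ that \(\nu\) is an \(L_\infty\)-derivation, which also secures \((\mu_1+t\nu)^2=0\) as its arity-one part.

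Reading off the arity-\(n\) component of \([D,N]=0\) yields, for each \(n\ge1\), the derivation identity
\[
\nu\bigl(\mu_n(x_1,\dotsc,x_n)\bigr)=\sum_{k=1}^n\pm\,\mu_n\bigl(x_1,\dotsc,\nu(x_k),\dotsc,x_n\bigr)
\]
to hold in \(\hat\Omega(\Sigma;\mathfrak d,\mathfrak g)\). I would verify this by decomposing the inputs \(\alpha_k\otimes x_k\) according to their form degrees \(p_k=|\alpha_k|\) and internal degrees \(a_k=|x_k|\). Because both \(\nu\) and the undeformed \(\mu_n\) factor as an operation on \(\Omega(\Sigma)\) tensored with an operation on \(\mathfrak d\), each term is a product of a de~Rham factor and an algebraic factor: the left-hand (outer-\(\nu\)) term carries the de~Rham factor \(\star(\alpha_1\wedge\dotsb\wedge\alpha_n)\) and the algebraic factor \(M(\mu_n^{\mathfrak d}(x_1,\dotsc,x_n))\), i.e.\ exactly \(X^{a_1,p_1;\dotsb;a_i,p_i}\), while each inner-\(\nu\) term carries \(\alpha_1\wedge\dotsb\wedge\star\alpha_k\wedge\dotsb\wedge\alpha_n\) and \(\mu_n^{\mathfrak d}(\dotsc,Mx_k,\dotsc)\), i.e.\ \(Y_k^{a_1,p_1;\dotsb;a_i,p_i}\).

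The crux is that the Hodge star does not commute with wedge products, so whether these de~Rham factors are linearly dependent is dictated entirely by the multiset \(\{p_k\}\), and this is precisely what properties (A) and (B) tabulate. When neither holds the de~Rham factors are independent, so each algebraic coefficient must vanish in the subquotient, giving \eqref{eq:generic_case}; when only (A) holds exactly one \(\alpha_k\) has positive degree, whence \(\star(\alpha_1\wedge\dotsb\wedge\alpha_n)=\pm\,\alpha_1\wedge\dotsb\wedge\star\alpha_k\wedge\dotsb\wedge\alpha_n\) and the outer term pairs with a single inner term, giving \eqref{eq:cond_(A)_case}; when only (B) holds two factors share a positive degree \(p\), and \(\star\alpha_k\wedge\alpha_l=(-1)^{p(d-p)}\alpha_k\wedge\star\alpha_l\) relates the two inner terms up to the Hodge sign, giving \eqref{eq:cond_(B)_case} while the outer term vanishes on its own by degree; and when all \(\alpha_k\) are functions every de~Rham factor collapses to the same top form and all terms add, giving \eqref{eq:cond_(AB)_case}. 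Throughout one must track that ``vanishing in \(\hat\Omega\)'' means landing in the ideal \(\check\Omega(\Sigma;\mathfrak g)\) once the total degree reaches \(3\) (so \(V=\mathfrak g\) is permitted) and genuine vanishing otherwise (\(V=0\)); this matches the definition of \(V\) and is where the admissibility conditions \eqref{eq:cond1}--\eqref{eq:cond3} already used to build \(\hat\Omega\) re-enter. I expect this sign-and-independence bookkeeping to be the main obstacle, since it requires simultaneously controlling the Koszul signs, the \(\star\)-versus-\(\wedge\) interplay, and the descent to the subquotient in every form-degree sector.

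Finally, cyclicity of the deformation reduces to cyclicity of the single new term: writing the induced pairing as \(\langle\!\langle\alpha\otimes x,\beta\otimes y\rangle\!\rangle=\pm\bigl(\int_\Sigma\alpha\wedge\beta\bigr)\langle x,y\rangle\), one computes \(\langle\!\langle\nu(\alpha\otimes x),\beta\otimes y\rangle\!\rangle=\pm\bigl(\int_\Sigma\star\alpha\wedge\beta\bigr)\langle Mx,y\rangle\) and \(\langle\!\langle\alpha\otimes x,\nu(\beta\otimes y)\rangle\!\rangle=\pm\bigl(\int_\Sigma\alpha\wedge\star\beta\bigr)\langle x,My\rangle\). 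Here \(\int_\Sigma\star\alpha\wedge\beta\) is the symmetric \(L^2\)-pairing on forms, and \(\langle Mx,y\rangle=\langle x,My\rangle\) by \eqref{eq:hodge_cyclicity} together with the graded symmetry of \(\langle-,-\rangle\); hence the two expressions agree up to the expected Koszul sign, so \(\nu\) is graded-self-adjoint and the deformed brackets remain cyclic.
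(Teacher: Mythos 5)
Your strategy is, at bottom, the paper's own: the coderivation identity \((D+tN)^2=D^2+t[D,N]+t^2N^2\) is a repackaging of the paper's expansion of the homotopy Jacobi identities into \(\mathcal O(t^0)\), \(\mathcal O(t)\) and \(\mathcal O(t^2)\) parts; your \(N^2=0\) argument (degree reasons, \(\nu\) supported in total degree one) is the paper's \(\mathcal O(t^2)\) step; your case analysis of the derivation identity by the multiset \(\{p_k\}\) via properties (A)/(B), matching the outer term to \(X^{a_1,p_1;\dotsb;a_i,p_i}\) and the inner terms to \(Y_k^{a_1,p_1;\dotsb;a_i,p_i}\) and invoking \eqref{eq:generic_case}--\eqref{eq:cond_(AB)_case}, is the paper's \(\mathcal O(t)\) step; and cyclicity from \eqref{eq:hodge_cyclicity} is identical.

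There is, however, a genuine gap at arity one. Your verification of \([D,N]=0\) rests on the claim that ``both \(\nu\) and the undeformed \(\mu_n\) factor as an operation on \(\Omega(\Sigma)\) tensored with an operation on \(\mathfrak d\)''. That is false for \(n=1\): the unary bracket is \(\mu_1(\alpha\otimes x)=\mathrm d\alpha\otimes x+(-1)^{|\alpha|}\alpha\otimes\mu_1^{\mathfrak d}(x)\), and the exterior-derivative summand is not of the form (wedge)\(\,\otimes\,\mu^{\mathfrak d}\). Consequently the cross-terms \(\nu(\mathrm d\alpha\otimes x)\) and \(\pm\,\mathrm d{\star}\alpha\otimes M(x)\) appear in the \(n=1\) derivation identity but are invisible to your \(X\)/\(Y_k\) bookkeeping, and the Hodge-structure axioms say nothing about them. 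The paper devotes a separate step to exactly these terms: when \(|\alpha|+|x|=0\) the input is necessarily \(\mathfrak g\)-valued (that is what \(\hat\Omega\) looks like in non-positive degrees), so \(\nu(\mathrm d\alpha\otimes x)=\star\mathrm d\alpha\otimes M(x)=0\) because \(M\) annihilates \(\mathfrak g\); when \(|\alpha|+|x|=1\) the surviving term \(\mathrm d{\star}\alpha\otimes M(x)\) is \(\mathfrak g\)-valued of total degree \(3\), hence lies in the ideal \(\check\Omega(\Sigma;\mathfrak g)\) and dies in the subquotient. Without this argument, which uses \(\operatorname{im}M=\mathfrak g=\ker M\) and the quotient structure rather than conditions (A)/(B), your proof of the cocycle condition is incomplete. (A lesser issue: your parenthetical claim that in case (B) ``the outer term vanishes on its own by degree'' is asserted, not proved --- the degree condition making \(X\) nonzero is compatible with form degrees \(\{0,\dotsc,0,p,p\}\); the paper is silent on this point too, so I only flag it.)
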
\begin{proof}
We first show that the \(L_\infty\)-algebra homotopy Jacobi identities hold for \(\mu_i^M\). It is convenient to decompose the \(L_\infty\)-algebra homotopy Jacobi identities according to the power of the formal parameter \(t\), so that we have the \(\mathcal O(t^0)\), \(\mathcal O(t)\), and \(\mathcal O(t^2)\) components, which must each vanish separately. 
\begin{itemize}
\item The \(\mathcal O(t^0)\) component of the homotopy Jacobi identities for \(\mu_i^M\) is the original homotopy Jacobi identities of \(\tilde\Omega^i(\Sigma;\mathfrak d,\mathfrak g)\), so there is nothing to check.
\item The \(\mathcal O(t^2)\) component of the homotopy Jacobi identities for \(\mu_i^M\) vanishes since \(\nu\circ\nu=0\) for degree reasons: \(\nu\) is only nonzero on elements of degree one.
\item The \(\mathcal O(t)\) component of the homotopy Jacobi identities for \(\mu_i^M\) is
\begin{equation}\label{eq:O(t)_component_identity}
    \nu(\mu^{\hat\Omega^i(\Sigma;\mathfrak d,\mathfrak g)}_i(x_1,\dotsc,x_i))
    +
    \sum_{j=1}^i\pm\mu_i^{\hat\Omega^i(\Sigma;\mathfrak d,\mathfrak g)}(x_1,\dotsc,\nu(x_j),\dotsc,x_i)
    =0.
\end{equation}
The operator \(\mu_i^{\hat\Omega(\Sigma;\mathfrak d,\mathfrak g)}\) has two kinds of terms: one (only present on \(\mu_1^{\hat\Omega(\Sigma;\mathfrak d,\mathfrak g)}\)) involving the exterior derivative of forms, and the other involving \(\mu_i^{\mathfrak d}\) and wedge products of differential forms.

Let us first check the exterior derivative term on \(\mu_1^{\hat\Omega(\Sigma;\mathfrak d,\mathfrak g)}\). Given \(\alpha\in\Omega^\bullet(X)\) and \(x\in\mathfrak d\), the \(L_\infty\)-algebra identity is
\begin{equation}
    \left[|\alpha|+|x|=0\right]\star\mathrm d\alpha\otimes M(x)
    +
    \left[|\alpha|+|x|=1\right]\mathrm d\star\alpha\otimes M(x)
    \in V,
\end{equation}
where \([\dotsb]\) is the Iverson bracket \cite{Knuth} (\(1\) if the enclosed statement is true,  \(0\) otherwise). When \(|\alpha|+|x|=0\), we need \(\alpha\in\mathfrak g\), and the term vanishes since \(M(\mathfrak g)=0\). When \(|\alpha|+|x|=1\), then \(M(x)\mathrm d\star\alpha\) carries degree 3, and since \(M(x)\in\mathfrak g\), the result belongs to \(V\). (As a special case: if \(|x|=1\), then the first Iverson bracket can never be true; the second Iverson bracket is nonzero only if \(|x|=0\), but then the second term vanishes for form-degree reasons.)

Let us consider the other terms in \(\mu_i^{\hat\Omega(\Sigma;\mathfrak d,\mathfrak g)}\), that is, those which are \(\mu_i^{\mathfrak d}\) with the form legs wedged together. In \eqref{eq:O(t)_component_identity}, applied to \(i\) arguments of form degrees \(p_1,\dotsc,p_i\) and internal degrees \(a_1,\dotsc,a_i\), one sees that since Hodge stars don't distribute across wedge products, the terms can't cancel each other unless condition (A) or condition (B) holds. When neither hold, then each term must vanish individually, corresponding to \eqref{eq:generic_case}. When condition (A) holds (all but one are 0-forms), we have the identity
\begin{equation}
    \star(\alpha_0\wedge\alpha_1\wedge\dotsb\wedge\alpha_i)
    =
    \alpha_0\wedge\alpha_1\wedge\dotsb(\star\alpha_k)\wedge\dotsb\wedge\alpha_i
\end{equation}
where all the \(\alpha\)s are zero-forms except for \(\alpha_k\), so that \eqref{eq:cond_(A)_case} suffices. When condition (B) holds (all but two are 0-forms, and the two have the same form degree), we have the identity
\begin{equation}
    \alpha_0\wedge\dotsb\wedge(\star\alpha_k)\wedge\dotsb\wedge\alpha_i
    =
    (-1)^{|\alpha_k|(d-|\alpha_k|)}
    \alpha_0\wedge\dotsb\wedge(\star\alpha_l)\wedge\dotsb\wedge\alpha_i
\end{equation}
where \(\alpha_k\) and \(\alpha_l\) are the two forms with positive degree, and then \eqref{eq:cond_(B)_case} suffices. When all forms are 0-forms, then we can relax the condition to \eqref{eq:cond_(AB)_case}.
\end{itemize}

Finally, cyclicity of \(\mu^M_1\) follows from \eqref{eq:hodge_cyclicity}.
\end{proof}

\subsection{Admissible fibrations on symplectic dg-manifolds}
We now formulate a notion analogous to Manin pairs in a Lie algebra, which is necessary to consider homotopy Manin theories with nonlinear target spaces.

\begin{definition}
An \emph{admissible fibration} consists of a graded vector bundle \(p\colon X\twoheadrightarrow Y\) on a graded manifold \(Y\) together with a homological vector field \(Q\) and a symplectic form \(\omega\) on \(X\) and an Ehresmann connection
\begin{equation}
    \mathrm TX = \mathrm V_p \oplus H
\end{equation}
(where \(\mathrm V_p\) is the vertical bundle) such that:
\begin{itemize}
\item \(\omega\) is bilinear on each fibre of \(p\)
\item \(Q\) is a finite sum of homogeneous components with respect to the linear structure on the fibres of \(p\),
\item for any every \(x\in X\), the fibre
\begin{equation}
    H_x\subset\mathrm T_xX
\end{equation}
defines an isotropic subspace of the cyclic \(L_\infty\)-algebra \(\mathrm T_xX\) that is also an admissible subalgebra.
\end{itemize}
\end{definition}
Note that this notion is similar to, but differs from, the notions of Dirac structures in \cite{Arvanitakis:2021wkt}, \(\Lambda\)-structures in \cite{Severa:2001tze}, or homotopy Manin pairs defined in \cite[Def.~34]{2007LMaPh..81...19K}. Our notion is adapted to the current context of homotopy Manin theories.

\begin{definition}
A \emph{Hodge structure} on an admissible fibration \((p\colon X\to Y,H)\) is a family of Hodge structures defined for the family of admissible subalgebras \(H_x\subset \mathrm T_xX\) for each \(x\in X\) that is smooth with respect to \(X\) and constant along the fibre directions.
\end{definition}

\section{Homotopy Manin sigma models}
Using the algebraic and geometric structures from the previous section, we now deform the topological AKSZ sigma models to construct non-topological homotopy Manin sigma models.

Let us recall the construction of an AKSZ sigma model. Given a spacetime (or worldvolume) \(\Sigma\) and a symplectic \(n\)-algebroid \((X,Q,\omega)\), then the space
\begin{equation}
    \mathcal M\coloneqq\mathcal C^\infty(\mathrm T[1]\Sigma,X)
\end{equation}
of graded-smooth maps carries canonically the structure of a symplectic (Fréchet) dg-manifold \((\mathcal M,Q_{\mathcal M},\omega_{\mathcal M})\) with a symplectic form \(\omega_{\mathcal M}\) of degree \(-1\), whose coordinates are concentrated in degrees \(-n,\dotsc,0\). This may be regarded as a BV-extended configuration space satisfying the classical master equation.

Let \(M\) be a Hodge structure on an admissible fibration \(L\) of \(X\). Let us deform the differential \(Q_{\mathcal M}\) by
\begin{equation}Q'_{\mathcal M} = Q_{\mathcal M} + \star M t^a\frac\partial{\partial t^a}.\end{equation}
where \(t_a\) are the coordinates of degree \(0\) (the fields) and \(t^a\) are the corresponding coordinates of degree \(1\) (the antifields) with a DeWitt index \(a\). This vector field is not nilpotent on \(\mathcal M\).
However, let \(\tilde{\mathcal M}\) be the subspace of \(\mathcal M\) consisting only of the following coordinates:
\begin{itemize}
    \item Any coordinates of degrees \(0\) or \(1\)
    \item Any coordinates of degrees other than \(0\) or \(1\) that are constant along the fibres of the admissible fibration
\end{itemize}
Then \(Q'_{\mathcal M}\) restricted to \(\tilde{\mathcal M}\) is nilpotent, so that \((\tilde{\mathcal M},Q'_{\mathcal M},\omega|_{\tilde{\mathcal M}})\) forms a symplectic dg-manifold.
The \emph{homotopy Manin theory} associated to the data \((X,L,M)\) is the classical field theory whose BV formulation is given by the above symplectic dg-manifold.

\begin{example}
Consider the case where the body of \(X=\mathfrak d[1]\) is a single point.
Then we are simply dealing with a cyclic \(L_\infty\)-algebra \(\mathfrak d\) and an admissible subalgebra \(\mathfrak g\subset\mathfrak d\). Then the \(L_\infty\)-algebra associated to the homotopy Manin theory has the underlying graded vecor space
\begin{equation}
    \mathfrak G=\bigoplus_{p,q}\Omega^p(\Sigma)\otimes\mathfrak d_p^q\subset\Omega(\Sigma)\otimes\mathfrak d,
\end{equation}
where
\begin{equation}
    \mathfrak d_p^q\coloneqq\begin{cases}
        \mathfrak d^q/\mathfrak g^q & 2<p+q \\
        \mathfrak d^q & 1\le p+q\le2 \\
        \mathfrak g^q & p+q <1.
    \end{cases}
\end{equation}
The \(L_\infty\)-algebra structure \(\mu_i\) on \(\mathfrak G\) is given by restriction of the \(L_\infty\)-algebra structure \(\mu^{\Omega^\bullet(M)\otimes\mathfrak d}_i\) on \(\Omega^\bullet(M)\otimes\mathfrak d\) to the above subspace, except that \(\mu_1\) has an extra term:
\begin{equation}
\begin{aligned}
    \mu_i &=\mu^{\Omega^\bullet(M)\otimes\mathfrak d}_i|_{\mathfrak G}\qquad(i>1)\\
    \mu_1(x)&=\begin{cases}
    \mu_1^{\Omega^\bullet(M)\otimes\mathfrak d}(x) + M\star x&|x|=1\\
    \mu_1^{\Omega^\bullet(M)\otimes\mathfrak d}(x) & |x|\ne1.
    \end{cases}
\end{aligned}
\end{equation}
\end{example}

\subsection{Action of homotopy Manin models in local coordinates}
Given that the homotopy Manin theory is given by a symplectic dg-manifold, it admits an action formulation. (As such, the class of homotopy Manin theories described here cannot suffer from the kind of failure of unitarity described in \cite{Chakrabarti:2024crx}.)

We can write down the corresponding classical homotopy Maurer--Cartan (hMC) action for the (deformed) cyclic \(L_\infty\)-algebra \(\hat\Omega(\Sigma;\mathfrak d,\mathfrak g),\mu_i^M\):
\begin{equation}
\begin{aligned}
    S&=\int_\Sigma\left\langle\mathbb A,\sum_i\frac1{(i+1)!}\mu_i^M(\mathbb A,\dotsc,\mathbb A)\right\rangle\\
    &=
    \int_\Sigma
    \left(
    \frac12\left\langle\mathbb A,\mathrm\mathrm d\mathbb A\right\rangle
    +
    \frac1{(i+1)!}\left\langle\mathbb A,\sum_i\mu_i^{\mathfrak d}(\mathbb A,\dotsc,\mathbb A)\right\rangle+\frac12\langle\mathbb A,\star M\mathbb A\rangle\right)
\end{aligned}
\end{equation}
restricted to  the space of ordinary fields (rather than antifields, ghosts, or ghost antifields), which is given by  \(\mathbb A\in \bigoplus_i\Omega^{1+i}(\Sigma;\mathfrak d_{-i})\).

Picking a splitting for convenience, this is
\begin{equation}
    \bigoplus_i\Omega^{1+i}(\Sigma;\mathfrak g_{-i})\oplus\Omega^{1+i}(X;(\mathfrak g_i)^*).
\end{equation}
Call the first component \(A\) (the dynamical field) and the second component \(\tilde A\) (the field made auxiliary by the mass term).
Then the action is
\begin{equation}
    \int_\Sigma\operatorname{hMC}(A\oplus\tilde A)+\frac12\langle\tilde A\wedge\star M\tilde A\rangle.
\end{equation}

\subsection{Energy positivity}

We briefly examine the Hamiltonian formulation of an arbitrary homotopy Manin gauge theory in order to demonstrate that the Hamiltonian (\emph{sans} constraints) is bounded from below. In other words these gauge theories have positive energy.

The calculation is essentially the same as in \cite{Arvanitakis:2024dbu}.
First, we perturbatively expand the symplectic \(L_\infty\)-algebroid into a cyclic \(L_\infty\)-algebra \(\mathfrak d=\bigoplus_{i=2-d}^1\mathfrak d_i\), now concentrated in degrees \(\{2-d,\dotsc,1\}\) (where \(d\) is the dimension of spacetime \(\Sigma\)).
Let us write the homotopy Manin action as
\begin{equation}
    \int_\Sigma\Theta(\mathbb A)+\frac12\big\langle \mathbb A\, ,\mathrm{d} \mathbb A \big\rangle +\frac12\big\langle\mathbb A,*M\mathbb A\big\rangle\,,
\end{equation}
where
\begin{enumerate}
    \item $\mathbb A$ is a polyform whose $p$-form component is valued in \(\mathfrak d_{1-p}\);
    \item $\Theta(\mathbb A)$ is the contribution to the action from the target-space $L_\infty$ brackets, and thus involves no derivatives of $\mathbb A$; and
    \item we have set all antifields, ghosts, etc.~to zero. 
\end{enumerate}
For the Hamiltonian formulation, we split the field \(\mathbb A\) into time and space components:
\begin{equation}
\mathbb A=\mathrm{d} x^0 \mathbb A_0 +\mathbb A_\Xi\,,\qquad \iota_{\partial_0}\mathbb A_0=\iota_{\partial_0}\mathbb A_\Xi=0\,,\qquad \mathrm{d}\mathbb A=\mathrm{d} x^0 \wedge\dot {\mathbb A}_\Xi + \mathrm{d}_\Xi \mathbb A_\Xi -\mathrm{d} x^0 \wedge\mathrm{d}_\Xi \mathbb A_0,
\end{equation}
where spacetime $\Sigma$ is also split into time and space as $\Sigma=\mathbb R\times \Xi$ and $x^0$ runs along $\mathbb R$, and \(\iota\) denotes the interior product, and \(\dot{}\) denotes derivative along \(x^0\). (In fact everything below is also valid locally up to integration by parts.)  First, we evaluate the term involving derivatives:
\begin{equation}
\int_\Sigma \eta\Big( \mathbb A \mathrm{d} \mathbb A\Big)=\int_\Sigma \eta\mleft( 2\mathrm{d} x^0\mathbb A_0 \mathrm{d}_\Xi \mathbb A_\Xi + \mathbb A_\Xi \mathrm{d} x^0\dot{\mathbb A}_\Xi\mright),
\end{equation}
from which we see that $\mathbb A_\Sigma$ together are the positions and momenta, while $\mathbb A_0$ appears linearly. In fact $\mathbb A_0$ will appear linearly also in $\Theta(\mathbb A)$ (because it must be linear in $\mathrm{d} x^0$) but it appears quadratically in the mass term:
\begin{equation}
\begin{aligned}
    \Theta(\mathbb A)&=\mathrm{d} x^0\wedge\mathbb A_0\wedge\mathbb G',\\\frac{1}{2}\eta\Big( (M\mathbb A) \wedge{\star \mathbb A}\Big)&=\frac{1}{2}\eta\mleft( \mathrm{d} x^0\wedge(M\mathbb A_0)\wedge{\star (\mathrm{d} x^0 \wedge\mathbb A_0)} + \frac{1}{2}(M\mathbb A_\Xi)\wedge {\star \mathbb A_\Xi} \mright.\\&\hspace{2cm}\mleft.+ \underbrace{\mathrm{d} x^0\wedge(M\mathbb A_0)\wedge {\star (\mathbb A_\Xi)}}_{=0 \text{ if }g^{00}\neq 0}\mright).
\end{aligned}\end{equation}
Here $\mathbb G'$ (which depends on $\mathbb A_\Xi$) is part of the Gauss law constraint for the AKSZ sigma model and is defined by the above formula; the last term in the second equation may be seen to vanish by linear algebra --- if the $\mathrm{d} x^\mu$ are e.g.\ orthonormal, $\star\mathbb A_\Sigma$ will contain $\mathrm{d} x^0$ and thus $\mathrm{d} x^0 \star{\mathbb A}_\Sigma=0$.

Let us now  introduce another split:
\begin{equation}
\frak d=\ker M+\frak e\,;\qquad \mathbb A_0=\underbrace{\mathbb L_0}_{\mathclap{\in \ker M}} +\underbrace{\mathbb E_0}_{\mathclap{\in\frak e}},
\end{equation}
where $\frak e$ is any Lagrangian complement of $\frak g$ in $\frak d$, which may be identified with $\frak g^\star$.
(We do not split $\mathbb A_\Xi$ this way.) Then the action takes the form
\begin{equation}
S=\int_\Sigma\Big\langle\mathbb A_\Xi \wedge\mathrm{d} x^0\wedge\dot{\mathbb A}_\Xi + \mathrm{d} x^0(\mathbb L_0+\mathbb E_0)\wedge \mathbb G +\frac{1}{2}(M\mathbb A_\Xi)\wedge {\star \mathbb A_\Xi} +\frac{1}{2}\mathrm{d} x^0\wedge(M\mathbb E_0)\wedge {\star (\mathrm{d} x^0 \wedge\mathbb E_0)}\Big\rangle
\end{equation}
with $\mathbb G\coloneqq\mathbb G'+2 \mathrm{d}_\Xi \wedge\mathbb A_\Xi$ being the AKSZ model Gauss law constraint; it depends only on $\mathbb A_\Xi$.

The  fields to be varied in the action are now $\mathbb A_\Sigma$, $\mathbb L_0$ and $\mathbb E_0$. Of these, $\mathbb L_0$ is a Lagrange multiplier enforcing certain components of  the AKSZ sigma model Gauss law. Since \(\operatorname{im}M=\mathfrak g=\ker M\), we see that \(\langle M\bullet,\bullet\rangle|_{\frak e}\) is nondegenerate, and $\mathbb E_0$ may be integrated out algebraically. To do this requires use of projection/rejection identities to handle the Hodge star of a wedge, which give
\begin{equation}
\langle\mathrm{d} x^0(M\mathbb E_0) ,{\star (\mathrm{d} x^0 \mathbb E_0)}\rangle=g^{00}\langle M,\mathbb E_0 {\star}\mathbb E_0\rangle\,.
\end{equation}
Completing the square then gives exactly the same sort of term as in \cite{Arvanitakis:2024dbu}; this is a perfect square with a $-g^{00}$ prefactor which will be non-negative in Lorentzian signature (so $g^{00}<0$) assuming also that  $\langle M\bullet,\bullet\rangle|_{\frak e}$ is positive-semidefinite. In that case, the term $\langle\tfrac{1}{2}(M\mathbb A_\Sigma) {\star (\mathbb A_\Sigma)}\rangle$ is non-negative as well; together these terms form the Hamiltonian, which is manifestly non-negative.

\section{Examples}
Let us discuss particular examples of the class of homotopy Manin theories constructed above.

\subsection{General classes of examples}
First, we list those examples that can be defined in an arbitrary number of spacetime (or worldvolume) dimensions.

\begin{example}[First-order Yang--Mills theory]
Let \(\mathfrak g\) be a Lie algebra. Then \((\mathfrak d,\mathfrak g)\) is a Manin pair, where
\begin{equation}
    \mathfrak d=\mathfrak g^*[n-3]\oplus\mathfrak g
\end{equation}
carries Lie bracket
\begin{align}
    [x,y]_{\mathfrak d}&=[x,y]_{\mathfrak g} & 
    [x,\tilde x]_{\mathfrak d}&=\operatorname{coad}_x(\tilde x)&
    [\tilde x,\tilde y]&=0
\end{align}
for \(x,y\in\mathfrak g\) and \(\tilde x,\tilde y\in\mathfrak g^*[n-3]\), where \(\operatorname{coad}\) is the coadjoint representation.

Suppose that \(\kappa\colon\mathfrak g\otimes\mathfrak g\to\mathbb R\) is an invariant metric on \(\mathfrak g\), which induces the musical isomorphism
\begin{align}
    \varkappa^\sharp\colon\mathfrak g&\to\mathfrak g^*\\
    x&\mapsto\kappa(x,-).
\end{align}
Then 
\begin{align}
    M\colon\mathfrak d&\to\mathfrak d \\
    (x\oplus\tilde x[n-3])&\mapsto(\varkappa^\sharp)^{-1}(\tilde x)
\end{align}
(for any \(x\in\mathfrak g\), \(\tilde x\in\mathfrak g^*\)) is a Hodge structure on \((\mathfrak d,\mathfrak g)\).

Given a \(d\)-dimensional oriented (pseudo-)Riemannian manifold \(\Sigma\), then
\[
    \Omega(\Sigma)\otimes\mathfrak d
    =
    \Omega(\Sigma)\otimes\mathfrak g
    \oplus
    (\Omega(\Sigma)\otimes\mathfrak g^*)[d-3].
\]
The subquotient \(\hat\Omega(\Sigma;\mathfrak d,\mathfrak g)\) may be identified with the subspace of \(\Omega(\Sigma;\mathfrak d)\) that excludes the following:
\begin{itemize}
\item elements of the form \(\alpha\otimes x\) where \(\alpha\in\Omega^p\) and \(x\in\mathfrak g^*[d-3]\) and \(p-d+3\le0\) (such elements do not lie in \(\tilde\Omega(\Sigma;\mathfrak d,\mathfrak g)\))
\item elements of the form \(\alpha\otimes x\) where \(\alpha\in\Omega^p\) and \(x\in\mathfrak g\) and \(p\ge3\) (such elements lie in \(\check\Omega(\Sigma;\mathfrak g)\)).
\end{itemize}
That is, we may identify
\begin{multline}
    \hat\Omega(\Sigma;\mathfrak d,\mathfrak g)\cong
    \underbrace{\underset c{\Omega^0(\Sigma;\mathfrak g)}}_{\text{ghost}}
    \oplus
    \underbrace{\underset A{\Omega^1(\Sigma;\mathfrak g)}
    \oplus
    \underset B{\Omega^{d-2}(\Sigma;\mathfrak g^*[d-3])}}_{\text{fields}}\\
    \oplus
    \underbrace{\underset{A^+}{\Omega^{d-1}(\Sigma;\mathfrak g^*[d-3])}
    \oplus
    \underset{B^+}{\Omega^2(\Sigma;\mathfrak g)}}_{\text{antifields}}
    \oplus
    \underbrace{\underset{c^+}{\Omega^d(\Sigma;\mathfrak g^*[d-3])}}_{\text{ghost antifield}},
\end{multline}
which agrees with the field content for first-order Yang--Mills theory. The associated homotopy Maurer--Cartan action is
\begin{equation}
    S_{\hat\Omega(\Sigma;\mathfrak d,\mathfrak g)}=\int_\Sigma B\wedge\left(\mathrm dA+\frac12[A,A]\right)
    + c(\mathrm dA^++A\wedge A^++B\wedge B^+)
    + \frac12c^+[c,c].
\end{equation}
The deformation \(\mu_i^M\) adds an extra quadratic term to the Maurer--Cartan action:
\begin{equation}
    S = \int_\Sigma B\wedge\left(\mathrm dA+\frac12[A,A]\right)+\frac12tB\wedge\star B
    + c(\mathrm dA^++A\wedge A^++B\wedge B^+)
    + \frac12c^+[c,c],
\end{equation}
where we may set \(t=1\) (or absorb it into \(B\)), which yields the usual action for first-order Yang--Mills theory.
\end{example}

\begin{example}[Ordinary sigma model]
Let \((\Sigma,g)\) be an \((n+1)\)-dimensional pseudo-Riemannian manifold (worldvolume), and let \(Y\) be a smooth manifold (target space). Define the shifted cotangent bundle
\begin{equation}
    X\coloneqq\mathrm T^*[n]Y\overset p\twoheadrightarrow Y
\end{equation}
with vanishing homological vector field \(Q=0\) and the canonical symplectic structure is a symplectic \(n\)-algebroid.
The corresponding AKSZ theory is given by the action
\begin{equation}
    S = \int_\Sigma A_i\wedge\mathrm d\phi^i
\end{equation}
where \(\phi\in\Omega^0(\Sigma; Y)\) and \(A\in\Omega^n(\Sigma;\phi^*\mathrm T^*Y)\) and \(\mathrm d\) is the derivative of a smooth map between manifolds so that \(\mathrm d\phi\in\Omega^1(\Sigma;\phi^*\mathrm TY)\). This is a trivial theory describing a constant \(\phi\) and closed \(A\).

Fix a pseudo-Riemannian metric \(M_{ij}\) on \(Y\). Then the Levi-Civita connection on \(\mathrm TY\) induces a canonical Ehresmann connection
\begin{equation}
    \mathrm TX
    =
    \mathrm V_p \oplus H
\end{equation}
on \(\mathrm TX\), where \(p\colon X\to Y\) is the canonical projection. Then \((p,H)\) is an admissible fibration, and a Hodge structure on it is given by the inverse Riemannian metric \(M^{ij}\). The corresponding homotopy Manin theory is
\begin{equation}
    S = \int_\Sigma A_i\wedge d\phi^i - \frac12M^{ij}A_i\wedge\star A_j.
\end{equation}
The equation of motion for \(A\) is now
\begin{equation}
    A_i = \star^{-1}M_{ij}\mathrm d\phi^j.
\end{equation}
Integrating out \(A\), we obtain
\begin{equation}
    S = \frac12\int_\Sigma M_{ij}(\star^{-1}\mathrm d\phi^i)\wedge\mathrm d\phi^j,
\end{equation}
which is the action for the ordinary sigma model on the Riemannian manifold \((X,M)\).
\end{example}

\begin{example}[Freedman--Townsend form of principal chiral model]
Let \(\mathfrak g\) be a Lie algebra. Then \((\mathfrak d,\mathfrak g^*[d-3])\) is a Manin pair, where
\begin{equation}
    \mathfrak d=\mathfrak g^*[d-3]\oplus\mathfrak g
\end{equation}
carries Lie bracket
\begin{align}
    [x,y]_{\mathfrak d}&=[x,y]_{\mathfrak g} & 
    [x,\tilde x]_{\mathfrak d}&=\operatorname{coad}_x(\tilde x)&
    [\tilde x,\tilde y]&=0
\end{align}
for \(x,y\in\mathfrak g\) and \(\tilde x,\tilde y\in\mathfrak g^*[d-3]\), where \(\operatorname{coad}\) is the coadjoint representation.

Suppose that \(\kappa\colon\mathfrak g\otimes\mathfrak g\to\mathbb R\) is an invariant metric on \(\mathfrak g\), which by musical isomorphism induces an isomorphism
\begin{align}
    \varkappa^\sharp\colon\mathfrak g&\to\mathfrak g^*\\
    x&\mapsto\kappa(x,-).
\end{align}
Then 
\begin{align}
    M\colon\mathfrak d&\to\mathfrak d \\
    (x\oplus\tilde x[d-3])&\mapsto(0\oplus\varkappa^\sharp(x))
\end{align}
(for any \(x\in\mathfrak g\), \(\tilde x\in\mathfrak g^*\)) is a Hodge structure on \((\mathfrak d,\mathfrak g^*[d-3])\).

The corresponding homotopy Manin theory on an \(n\)-dimensional (pseudo-)Riemannian manifold \(\Sigma\) has field content
\begin{multline}
    \underbrace{\underset{c^{(0)}}{\Omega^0(\Sigma;\mathfrak g^*[d-3])}}_{\mathclap{\text{\((d-4)\)\textsuperscript{th}-order ghost}}}
    \oplus
    \dotsb
    \oplus\underbrace{\underset{c^{(d-3)}}{\Omega^{d-3}(\Sigma;\mathfrak g^*[d-3])}}_{\text{ghost}}\\
    \oplus
    \underbrace{\underset{c^{(d-1)+}\eqqcolon A}{\Omega^1(\Sigma;\mathfrak g)}
    \oplus
    \underset{c^{(n-2)}\eqqcolon B}{\Omega^{n-2}(\Sigma;\mathfrak g^*[d-3])}}_{\text{fields}}
    \oplus
    \underbrace{\underset{c^{(d-1)}\eqqcolon A^+}{\Omega^{n-1}(\Sigma;\mathfrak g^*[d-3])}
    \oplus
    \underset{\mathclap{c^{(d-2)+}\eqqcolon B^+}}{\Omega^2(\Sigma;\mathfrak g)}}_{\text{antifields}}\\
    \oplus
    \underbrace{\underset{c^{(d-3)+}}{\Omega^3(\Sigma;\mathfrak g)}}_{\mathclap{\text{ghost antifield}}}
    \oplus
    \dotsb
    \oplus
    \underbrace{\underset{c^{(0)+}}{\Omega^d(\Sigma;\mathfrak g)}}_{\mathclap{\text{\((d-4)\)\textsuperscript{th}-order gh.\ antif.}}}.
\end{multline}
The corresponding action is the Freedman--Townsend formulation \cite{Freedman:1980us,Buchbinder:2024tui} of the principal chiral model:
\begin{equation}
    S = \int_\Sigma\operatorname{tr}\left(\frac12A\wedge\star A+\sum_{i=0}^{d-2}c^{(i)}\wedge\mathrm dc^{(i+1)+}
    + \sum_{i,j}c^{(i+j)}\wedge c^{+(i)}\wedge c^{+(j)}
    \right).
\end{equation}
Amongst these terms, the terms involving at most one field not of degrees \(1\) or \(2\) are
\begin{equation}
\begin{split}
    S &= \int_\Sigma\operatorname{tr}\left(B\wedge\left(\mathrm dA+\frac12[A,A]\right)+\frac12A\wedge\star A \right.\\&\left.\hspace{2cm}+ c^{(d-3)}\wedge(\mathrm d+A\wedge)B^+
   +\frac12B^+\wedge B^+\wedge c^{(d-4)}\right)+\dotsb.
\end{split}
\end{equation}
The field \(B\) enforces flatness of \(A\). We can solve this constraint as \(A=g^{-1}\mathrm dg\), so that the action becomes that of the principal chiral model.

\end{example}

\begin{example}[Broccoli--Değer--Theisen theory]
Suppose that \(\mathfrak h^0\) is a Lie algebra, and let \(V\) be a vector space. Let
\begin{equation}
    \alpha\colon(\mathfrak h^0)^{\wedge i}\to V
\end{equation}
be a \(V\)-valued Lie algebra cocycle. Then
\begin{equation}
    \mathfrak h\coloneqq \mathfrak h^0\oplus V[i-1]
\end{equation}
admits an \(L_\infty\)-algebra structure that is a central extension of \(\mathfrak h^0\). Let us then define
\begin{equation}
    \mathfrak d = \mathfrak h^0\oplus V[i-1]\oplus V^*[d-p-2]\oplus(\mathfrak h^0)^*[d-3]
\end{equation}
to be the \(L_\infty\)-algebra in which any bracket vanishes when one or more of the arguments belong to \(V^*[d-p-2]\oplus(\mathfrak h^0)^*[d-3]\). This admits a canonical cyclic structure. Then
\begin{equation}
    \mathfrak g=V^*[d-p-2]\oplus(\mathfrak h^0)^*[d-3]\subset\mathfrak d
\end{equation}
is an  Abelian \(L_\infty\)-subalgebra, which is easily seen to be admissible.
Further suppose that \(\mathfrak g\) admits an invariant inner product, and also equip \(V\) with an inner product. A Hodge structure is then given by
\begin{align}
    M\colon\mathfrak d&\to\mathfrak d\\
    (\tilde a,\tilde b,b,a)&\mapsto(0,0,\tilde b,\tilde a)
\end{align}
where we have implicitly identified \(\mathfrak g\) with \(\mathfrak g^*\) and \(V\) with \(V^*\), via the musical isomorphisms induced by the inner products on $\mathfrak g$ and $V$. The resulting action is then
\begin{equation}
S = \int \tilde A\wedge F[A] + \frac12M_{ab}A^a\wedge\star A^a + \tilde B\wedge(\mathrm dB+A\wedge\dotsb\wedge A)+\frac12M_{ij}B^i\wedge\star B^j,
\end{equation}
where the field content is
\begin{align}
A&\in\Omega^1(M;\mathfrak g)\\
B&\in\Omega^p(M;\mathfrak h)\\
\tilde B&\in\Omega^{d-p-1}(M;\mathfrak h^*)\\
\tilde A&\in\Omega^{d-2}(M;\mathfrak g^*).
\end{align}
The equations of motion are then
\begin{align}
    \mathrm dB+A\wedge\dotsb\wedge A&=0\\
    \mathrm d\tilde B_i+M_{ij}\star B^i&=0\\
    \mathrm dA+A\wedge A&=0\\
    \mathrm d\tilde A_a+M_{ab}\star A^i&=0,
\end{align}
which reproduces the equations in \cite[§3]{Broccoli:2021pvv}.
\end{example}

\subsection{Two dimensions}
A Lie 1-algebroid is the same as a Lie algebroid \(\mathfrak d\twoheadrightarrow Y\), and a symplectic Lie (1-)algebroid (the target space for a one-dimensional AKSZ sigma model) is the same as a Poisson manifold \((Y,\pi)\), or rather the associated cotangent Lie algebroid \(X\coloneqq\mathrm T^*_\pi[1]Y\), whose underlying vector bundle is the cotangent bundle \(\mathrm T^*[1]Y\) and whose anchor is given by \(\pi^\sharp\colon\mathrm T^*Y\to\mathrm TY\), and the symplectic form is the canonical pairing between \(\mathrm T^*[1]Y\) and \(\mathrm TY\).

The AKSZ sigma model in two dimensions is the Poisson sigma model \cite{Ikeda:1993fh,Schaller:1994es,Schaller:1994uj} (reviewed in \cite{Schaller:1995xk,contreras}), given by
\begin{equation}
    S = \int_\Sigma A_i\wedge\mathrm d\phi^i- \frac12\pi^{ij} A_i\wedge A_j
\end{equation}
for \(\phi\in\Omega^0(\Sigma;Y)\) and \(A\in\Omega^1(\Sigma;\phi^*\mathrm T^*Y)\).

\begin{example}
Given a Poisson manifold \((Y,\pi)\) with a Riemannian metric \(M_{ij}\) on \(Y\),
we have the admissible fibration
\begin{equation}
    \mathrm T_\pi^*[1]Y\twoheadrightarrow Y
\end{equation}
of the cotangent Lie algebroid \(X\coloneqq\mathrm T^*_\pi[1]Y\)
together with an Ehresmann connection corresponding to the Levi-Civita connection of \(M\).
A Hodge structure on this is given by the inverse Riemannian metric \(M^{ij}\).

The action of the homotopy Manin theory is
\begin{equation}
    S = \int_\Sigma A_i\wedge\mathrm d\phi^i - \frac12\pi^{ij} A_i\wedge A_j - \frac12M^{ij}A_i\wedge\star A_j
\end{equation}
This action is not quite invariant under the full Lie algebroid gauge symmetry
\begin{align}
    \delta\phi^i&=-\pi^{ij}\epsilon_j&
    \delta A_{\mu i} &= A_{\mu i}+\partial_\mu\alpha_i+\frac12\partial_i\pi^{jk}A_{\mu j}\alpha_k
\end{align}
for \(\alpha\in\Omega^0(\Sigma,\phi^*\mathrm T^*X)\) due to the mass term.

Now, we can integrate out \(A\) as
\begin{equation}
    A_i = (M^\sharp+\star \pi^\sharp)^{-1}{}_{ij}\mathrm d\phi^j,
\end{equation}
where \(M^\sharp\colon\mathrm T^*Y\to\mathrm TY\) is induced by the Riemannian metric \(M\),
so that
\begin{equation}
    S = \int_\Sigma g_{ij}\mathrm d\phi^i\wedge(M^\sharp+\star \pi^\sharp)^{-1}{}_{ij}\star\mathrm d\phi^j
\end{equation}
\end{example}

\begin{example}
Let \((Y,\pi)\) be a linear Poisson manifold, i.e.~a Lie coalgebra, and let \(X\coloneqq\mathrm T^*_\pi[1]Y\).
Then we have the admissible fibration consisting of the graded vector bundle
\begin{equation}
    \mathrm T^*[1]Y\cong Y\times Y^*[1]\twoheadrightarrow Y^*[1]
\end{equation}
equipped with the trivial Ehresmann connection.
A Hodge structure is given by a nondegenerate bilinear metric \(M\) on \(X\).
Then the action is
\begin{equation}
    S = \int_\Sigma A_i\wedge\mathrm d\phi^i-\frac12\pi^{ij} A_i \wedge A_j - \frac12M_{ij}\phi^i\wedge\star\phi^j
\end{equation}
for \(\phi\in\Omega^0(\Sigma)\otimes X\) and \(A\in\Omega^1(\Sigma)\otimes X\). We can integrate out \(\phi\) as
\begin{equation}
    \phi^i =   M^{ij}\star\mathrm dA_j
\end{equation}
so that
\begin{equation}
    S = \int_\Sigma\frac12M^{ij}\mathrm dA_i\wedge\star\mathrm dA_j-\frac12\pi^{ij}A_i\wedge A_j.
\end{equation}
This theory contains 2-dimensional Maxwell theory as the special case
where \(Y\) is a one-point space. In the general case, the theory describes a deformation of a
$\mathrm U(1)^n$ gauge theory, which is reminiscent of the Proca theory.
\end{example}

\subsubsection{Yang--Baxter sigma models}
Yang--Baxter sigma models \cite{Klimcik:2002zj,Klimcik:2008eq} (reviewed in \cite{yoshida,Hoare:2021dix}), which are integrable deformations of the principal chiral model or sigma models on symmetric spaces, may be naturally realised as homotopy Manin theories on Poisson--Lie groups \cite{Dri83a,Dri83b} (reviewed in \cite{Grabowski1995,Meusburger:2021cxe}).

Let \((G,\pi)\) be a Poisson--Lie group whose Lie algebra is \(\mathfrak g\), so that \(X\coloneqq\mathrm T^*_\pi[1]G\). The Poisson structure of \(G\) induces a Lie bialgebra structure on \(\mathfrak g\). The corresponding Poisson sigma model is
\begin{equation}
    S = \int_\Sigma \operatorname{tr}(g^{-1}\mathrm dg\wedge A) - \frac12\pi^\sharp(A\wedge A),
\end{equation}
where \(g\in\Omega^0(\Sigma;G)\) and \(A\in\Omega^1(\Sigma;\mathfrak g^*)\).
An admissible fibration on this Poisson--Lie group is given by the vector bundle \(X=\mathrm T^*_\pi[1]G\twoheadrightarrow G\) together with the canonical Ehresmann connection given by the canonical trivialisation \(\mathrm T^*[1]G\cong G\times\mathfrak g^*[1]\). A Hodge structure (metric) is given by the choice of an invariant metric on \(\mathfrak g\). Using this, we can deform the Poisson sigma model to
\begin{equation}
    S = \int_\Sigma\operatorname{tr}(g^{-1}\mathrm dg\wedge A) - \frac12\pi^\sharp(A\wedge A) - \frac12A\wedge\star A.
\end{equation}
The equation of motion for \(A\) is
\begin{equation}
    A+\star\pi^\sharp(A,-) = \star g^{-1}\mathrm dg.
\end{equation}
So, integrating \(A\) out, we obtain
\begin{equation}
    S = \int_\Sigma\operatorname{tr}\left(g^{-1}\mathrm dg(1+\star\pi^\sharp)^{-1}\wedge*g^{-1}\mathrm dg\right),
\end{equation}
which is seen to be the action for the Yang--Baxter sigma model.

\subsection{Three dimensions}
A symplectic Lie 2-algebra is equivalent to a Courant algebroid \cite{Roytenberg:1999mny,Roytenberg:2002nu}. Specifically, a Courant algebroid \((E\twoheadrightarrow X,\langle,\rangle,[,])\) corresponds to a symplectic Lie 2-algebroid
\begin{equation}
    \mathrm T^*[2]X\oplus E[1]\twoheadrightarrow X.
\end{equation}
Picking local coordinates, we have the action
\begin{equation}
    S = \int_\Sigma D\phi\wedge B + A\wedge\mathrm dA+A\wedge A\wedge A + B\wedge\rho(A).
\end{equation}
A Courant algebroid over a single point is the same as a Lie algebra \(\mathfrak d\) with an invariant metric. In this case, the corresponding AKSZ theory is (three-dimensional) Chern--Simons theory
\begin{equation}
    \int_\Sigma\operatorname{tr}\left(A\wedge\mathrm dA+\frac23A\wedge A\wedge A\right)
\end{equation}
for \(A\in\Omega^1(\Sigma;\mathfrak d)\).
A choice of a Manin pair \(\mathfrak g\subset\mathfrak d\) and a Hodge structure \(M\colon\mathfrak d/\mathfrak g\to\mathfrak g\) leads to the action
\begin{equation}
    \int_\Sigma\operatorname{tr}\left(A\wedge\mathrm dA+\frac23A\wedge A\wedge A\right)+A\wedge\star MA,
\end{equation}
which is the Manin theory \cite{Arvanitakis:2024dbu}.

\acknowledgments
A.S.A. was supported by the \foreignlanguage{dutch}{FWO-Vlaanderen} through
the project G006119N, as well as by the\foreignlanguage{dutch}{Vrije Universiteit Brussel} through
the Strategic Research Program `High-Energy Physics', and by an FWO Senior Postdoctoral Fellowship; he is currently supported by the Croatian Science Foundation
project `HigSSinGG — Higher Structures and Symmetries in Gauge and Gravity
Theories' (IP--2024--05--7921).

\bibliographystyle{unsrturl}
\bibliography{biblio}

\begin{thebibliography}{10}

\bibitem{Alexandrov:1995kv}
Michael~D. Alexandrov, Albert~Solomonovich Schwarz, Oleg~V. Zaboronsky, and
  Maxim~Lvovich Kontsevich.
\newblock The geometry of the master equation and topological quantum field
  theory.
\newblock {\em International Journal of Modern Physics A}, 12:1405--1429, March
  1997.
\newblock \href {https://arxiv.org/abs/hep-th/9502010}
  {\path{arXiv:hep-th/9502010}}, \href
  {https://doi.org/10.1142/S0217751X97001031}
  {\path{doi:10.1142/S0217751X97001031}}.

\bibitem{Roytenberg:2006qz}
Dmitry Roytenberg.
\newblock {AKSZ-BV} formalism and {C}ourant algebroid-induced topological field
  theories.
\newblock {\em Letters in Mathematical Physics}, 79:143--159, February 2007.
\newblock \href {https://arxiv.org/abs/hep-th/0608150}
  {\path{arXiv:hep-th/0608150}}, \href
  {https://doi.org/10.1007/s11005-006-0134-y}
  {\path{doi:10.1007/s11005-006-0134-y}}.

\bibitem{Cattaneo:2010re}
Alberto~Sergio Cattaneo and Florian Schätz.
\newblock Introduction to supergeometry.
\newblock {\em Reviews in Mathematical Physics}, 23(06):669--690, July 2011.
\newblock \href {https://arxiv.org/abs/1011.3401} {\path{arXiv:1011.3401}},
  \href {https://doi.org/10.1142/S0129055X11004400}
  {\path{doi:10.1142/S0129055X11004400}}.

\bibitem{Kotov:2010wr}
Alexei~{\textrm{Yu}}rievich Kotov and Thomas Strobl.
\newblock Generalizing geometry -- algebroids and sigma models.
\newblock In Vicente Cortés, editor, {\em Handbook of Pseudo-{R}iemannian
  Geometry and Supersymmetry}, volume~16 of {\em {IRMA} Lectures in Mathematics
  and Theoretical Physics}, pages 209--262. European Mathematical Society,
  2010.
\newblock \href {https://arxiv.org/abs/1004.0632} {\path{arXiv:1004.0632}},
  \href {https://doi.org/10.4171/079-1/7} {\path{doi:10.4171/079-1/7}}.

\bibitem{Ikeda:2012pv}
Noriaki Ikeda.
\newblock Lectures on {AKSZ} sigma models for physicists.
\newblock In Yoshiaki Maeda, Hitoshi Moriyoshi, Motoko Kotani, and Satoshi
  Watamura, editors, {\em Noncommutative Geometry and Physics 4. Workshop on
  Strings, Membranes and Topological Field Theory, Tohoku University, Sendai,
  Japan, 5--7 March 2015}, pages 79--169. World Scientific, May 2017.
\newblock \href {https://arxiv.org/abs/1204.3714} {\path{arXiv:1204.3714}},
  \href {https://doi.org/10.1142/9789813144613_0003}
  {\path{doi:10.1142/9789813144613_0003}}.

\bibitem{Kim:2018wvi}
Hyungrok Kim.
\newblock {AKSZ}-type topological quantum field theories and rational homotopy
  theory, September 2018.
\newblock \href {https://arxiv.org/abs/1809.09583} {\path{arXiv:1809.09583}}.

\bibitem{Vassilevich:2013ai}
Dmitri~Vladislavovich Vassilevich.
\newblock Holographic duals to {P}oisson sigma models and noncommutative
  quantum mechanics.
\newblock {\em Physical Review D}, 87(10):104011, 2013.
\newblock \href {https://arxiv.org/abs/1301.7029} {\path{arXiv:1301.7029}},
  \href {https://doi.org/10.1103/PhysRevD.87.104011}
  {\path{doi:10.1103/PhysRevD.87.104011}}.

\bibitem{Pulmann:2020omk}
Ján Pulmann, Pavol Ševera, and Donald~Ray Youmans.
\newblock Renormalization group flow of {C}hern-{S}imons boundary conditions
  and generalized {R}icci tensor.
\newblock {\em Journal of High Energy Physics}, 2020(10):096, 2020.
\newblock \href {https://arxiv.org/abs/2009.00509} {\path{arXiv:2009.00509}},
  \href {https://doi.org/10.1007/JHEP10(2020)096}
  {\path{doi:10.1007/JHEP10(2020)096}}.

\bibitem{Marotta:2021sia}
Vincenzo~Emilio Marotta and Richard~Joseph Szabo.
\newblock Algebroids, {AKSZ} constructions and doubled geometry.
\newblock {\em Complex Manifolds}, 8(1):354--402, January 2021.
\newblock \href {https://arxiv.org/abs/2104.07774} {\path{arXiv:2104.07774}},
  \href {https://doi.org/10.1515/coma-2020-0125}
  {\path{doi:10.1515/coma-2020-0125}}.

\bibitem{Arvanitakis:2015oga}
Alexandros~Spyridion Arvanitakis, Alexandre Sevrin, and Paul~Kingsley Townsend.
\newblock {Yang-Mills as massive Chern-Simons theory: a third way to
  three-dimensional gauge theories}.
\newblock {\em Physical Review Letters}, 114(18):181603, 2015.
\newblock \href {https://arxiv.org/abs/1501.07548} {\path{arXiv:1501.07548}},
  \href {https://doi.org/10.1103/PhysRevLett.114.181603}
  {\path{doi:10.1103/PhysRevLett.114.181603}}.

\bibitem{Deger:2021fvv}
Nihat~Sadik Değer and Jan Rosseel.
\newblock Novel 3{D} supersymmetric massive {Y}ang-{M}ills theory.
\newblock {\em Physical Review D}, 104(8):L081701, 2021.
\newblock \href {https://arxiv.org/abs/2105.13300} {\path{arXiv:2105.13300}},
  \href {https://doi.org/10.1103/PhysRevD.104.L081701}
  {\path{doi:10.1103/PhysRevD.104.L081701}}.

\bibitem{Deger:2022znj}
Nihat~Sadik Değer and Henning Samtleben.
\newblock A note on the third way consistent deformation of {Y}ang-{M}ills
  theory.
\newblock {\em Physics Letters B}, 833:137275, 2022.
\newblock \href {https://arxiv.org/abs/2205.15578} {\path{arXiv:2205.15578}},
  \href {https://doi.org/10.1016/j.physletb.2022.137275}
  {\path{doi:10.1016/j.physletb.2022.137275}}.

\bibitem{Kanakaris:2023kfq}
Dimitri Kanakaris~Decavel.
\newblock Localisation of the third way theory.
\newblock MSc thesis, \foreignlanguage{dutch}{Vrije Universiteit Brussel}, June
  2023.
\newblock \href {https://arxiv.org/abs/2312.00675} {\path{arXiv:2312.00675}},
  \href {https://doi.org/10.48550/arXiv.2312.00675}
  {\path{doi:10.48550/arXiv.2312.00675}}.

\bibitem{Deger:2021ojb}
Nihat~Sadik Değer.
\newblock A review of third way consistent theories.
\newblock {\em Journal of Physics: Conference Series}, 2191(1):012008, 2022.
\newblock \href {https://arxiv.org/abs/2109.04339} {\path{arXiv:2109.04339}},
  \href {https://doi.org/10.1088/1742-6596/2191/1/012008}
  {\path{doi:10.1088/1742-6596/2191/1/012008}}.

\bibitem{Arvanitakis:2024dbu}
Alexandros~Spyridion Arvanitakis and Dimitri Kanakaris~Decavel.
\newblock Localisation without supersymmetry: towards exact results from
  {D}irac structures in 3{D} ${N} = 0$ gauge theory, April 2024.
\newblock \href {https://arxiv.org/abs/2404.14472} {\path{arXiv:2404.14472}}.

\bibitem{Klimcik:2002zj}
Ctirad Klimčík.
\newblock {Y}ang-{B}axter \(\sigma\)-models and {dS/AdS} {T}-duality.
\newblock {\em Journal of High Energy Physics}, 2002(12):051, December 2002.
\newblock \href {https://arxiv.org/abs/hep-th/0210095}
  {\path{arXiv:hep-th/0210095}}, \href
  {https://doi.org/10.1088/1126-6708/2002/12/051}
  {\path{doi:10.1088/1126-6708/2002/12/051}}.

\bibitem{Klimcik:2008eq}
Ctirad Klimčík.
\newblock On integrability of the {Y}ang--{B}axter \(\sigma\)-model.
\newblock {\em Journal of Mathematical Physics}, 50(4):043508, April 2009.
\newblock \href {https://arxiv.org/abs/0802.3518} {\path{arXiv:0802.3518}},
  \href {https://doi.org/10.1063/1.3116242} {\path{doi:10.1063/1.3116242}}.

\bibitem{yoshida}
Kentaroh Yoshida.
\newblock {\em {Y}ang--{B}axter Deformation of {2D} Non-Linear Sigma Models:
  Towards Applications to {AdS/CFT}}.
\newblock SpringerBriefs in Mathematical Physics. Springer, June 2021.
\newblock \href {https://doi.org/10.1007/978-981-16-1703-4}
  {\path{doi:10.1007/978-981-16-1703-4}}.

\bibitem{Hoare:2021dix}
Benjamin~David Hoare.
\newblock Integrable deformations of sigma models.
\newblock {\em Journal of Physics A}, 55(9):093001, March 2022.
\newblock \href {https://arxiv.org/abs/2109.14284} {\path{arXiv:2109.14284}},
  \href {https://doi.org/10.1088/1751-8121/ac4a1e}
  {\path{doi:10.1088/1751-8121/ac4a1e}}.

\bibitem{Borsten:2024pfz}
Leron Borsten, Dimitri Kanakaris~Decavel, and Hyungrok Kim.
\newblock {Three-dimensional {\(\operatorname{SL}(2,\mathbb R)\)} Yang-Mills
  theory is equivalent to three-dimensional gravity with background sources}.
\newblock {\em Physical Review D}, 111(2):025005, 2025.
\newblock \href {https://arxiv.org/abs/2408.14228} {\path{arXiv:2408.14228}},
  \href {https://doi.org/10.1103/PhysRevD.111.025005}
  {\path{doi:10.1103/PhysRevD.111.025005}}.

\bibitem{Borsten:2024alh}
Leron Borsten, Dimitri Kanakaris~Decavel, and Hyungrok Kim.
\newblock {Gravity from AKSZ-Manin theories in two, three, and four
  dimensions}, October 2024.
\newblock \href {https://arxiv.org/abs/2410.10755} {\path{arXiv:2410.10755}},
  \href {https://doi.org/10.48550/arXiv.2410.10755}
  {\path{doi:10.48550/arXiv.2410.10755}}.

\bibitem{Barnich:2010sw}
Glenn Barnich and Maxim~Anatolievich Grigoriev.
\newblock First order parent formulation for generic gauge field theories.
\newblock {\em Journal of High Energy Physics}, 2011(01):122, January 2011.
\newblock \href {https://arxiv.org/abs/1009.0190} {\path{arXiv:1009.0190}},
  \href {https://doi.org/10.1007/JHEP01(2011)122}
  {\path{doi:10.1007/JHEP01(2011)122}}.

\bibitem{Grigoriev:2010ic}
Maxim~Anatolievich Grigoriev.
\newblock Parent formulation at the {L}agrangian level.
\newblock {\em Journal of High Energy Physics}, 2011(07):061, July 2011.
\newblock \href {https://arxiv.org/abs/1012.1903} {\path{arXiv:1012.1903}},
  \href {https://doi.org/10.1007/JHEP07(2011)061}
  {\path{doi:10.1007/JHEP07(2011)061}}.

\bibitem{Grigoriev:2012xg}
Maxim~Anatolievich Grigoriev.
\newblock Parent formulations, frame-like {L}agrangians, and generalized
  auxiliary fields.
\newblock {\em Journal of High Energy Physics}, 2012(12):048, 2012.
\newblock \href {https://arxiv.org/abs/1204.1793} {\path{arXiv:1204.1793}},
  \href {https://doi.org/10.1007/JHEP12(2012)048}
  {\path{doi:10.1007/JHEP12(2012)048}}.

\bibitem{Alkalaev:2013hta}
Konstantin~Borisovich Alkalaev and Maxim~Anatolievich Grigoriev.
\newblock Frame-like {L}agrangians and presymplectic {AKSZ}-type sigma models.
\newblock {\em International Journal of Modern Physics A}, 29(18):1450103, July
  2014.
\newblock \href {https://arxiv.org/abs/1312.5296} {\path{arXiv:1312.5296}},
  \href {https://doi.org/10.1142/S0217751X14501036}
  {\path{doi:10.1142/S0217751X14501036}}.

\bibitem{Grigoriev:2016wmk}
Maxim~Anatolievich Grigoriev.
\newblock Presymplectic structures and intrinsic {L}agrangians, June 2016.
\newblock \href {https://arxiv.org/abs/1606.07532} {\path{arXiv:1606.07532}},
  \href {https://doi.org/10.48550/arXiv.1606.07532}
  {\path{doi:10.48550/arXiv.1606.07532}}.

\bibitem{Pulmann:2019vrw}
Ján Pulmann, Pavol Ševera, and Fridrich Valach, III.
\newblock A nonabelian duality for (higher) gauge theories.
\newblock {\em Advances in Theoretical and Mathematical Physics},
  25(1):241--274, September 2021.
\newblock \href {https://arxiv.org/abs/1909.06151} {\path{arXiv:1909.06151}},
  \href {https://doi.org/10.4310/ATMP.2021.v25.n1.a5}
  {\path{doi:10.4310/ATMP.2021.v25.n1.a5}}.

\bibitem{Grigoriev:2020xec}
Maxim~Anatolievich Grigoriev and Alexei~Yurievich Kotov.
\newblock Presymplectic {AKSZ} formulation of {E}instein gravity.
\newblock {\em Journal of High Energy Physics}, 2021:181, September 2021.
\newblock \href {https://arxiv.org/abs/2008.11690} {\path{arXiv:2008.11690}},
  \href {https://doi.org/10.1007/JHEP09(2021)181}
  {\path{doi:10.1007/JHEP09(2021)181}}.

\bibitem{Jurco:2018sby}
Branislav Jurčo, Lorenzo Raspollini, Christian Sämann, and Martin Wolf.
\newblock ${L}_\infty$-algebras of classical field theories and the
  {B}atalin--{V}ilkovisky formalism.
\newblock {\em Fortschrifte der Physik}, 67(7):1900025, 2019.
\newblock \href {https://arxiv.org/abs/1809.09899} {\path{arXiv:1809.09899}},
  \href {https://doi.org/10.1002/prop.201900025}
  {\path{doi:10.1002/prop.201900025}}.

\bibitem{Kraft:2022efy}
Andreas Kraft and Jonas Schnitzer.
\newblock An introduction to \({L}_\infty\)-algebras and their homotopy theory
  for the working mathematician.
\newblock {\em Reviews in Mathematical Physics}, 36(01):2330006, 2024.
\newblock \href {https://arxiv.org/abs/2207.01861} {\path{arXiv:2207.01861}},
  \href {https://doi.org/10.1142/S0129055X23300066}
  {\path{doi:10.1142/S0129055X23300066}}.

\bibitem{zbMATH06126056}
Rajan~Amit Mehta and Marco Zambon.
\newblock {{\(L_{\infty}\)}}-algebra actions.
\newblock {\em Differential Geometry and its Applications}, 30(6):576--587,
  December 2012.
\newblock \href {https://doi.org/10.1016/j.difgeo.2012.07.006}
  {\path{doi:10.1016/j.difgeo.2012.07.006}}.

\bibitem{Knuth}
Donald~Ervin Knuth.
\newblock Two notes on notation.
\newblock {\em American Mathematical Monthly}, 99(5):403--422, 1992.
\newblock \href {https://arxiv.org/abs/math/9205211}
  {\path{arXiv:math/9205211}}, \href {https://doi.org/10.2307/2325085}
  {\path{doi:10.2307/2325085}}.

\bibitem{Arvanitakis:2021wkt}
Alexandros~Spyridion Arvanitakis.
\newblock Brane current algebras and generalised geometry from {QP} manifolds:
  Or, ``when they go high, we go low''.
\newblock {\em Journal of High Energy Physics}, 2021(11):114, November 2021.
\newblock \href {https://arxiv.org/abs/2103.08608} {\path{arXiv:2103.08608}},
  \href {https://doi.org/10.1007/JHEP11(2021)114}
  {\path{doi:10.1007/JHEP11(2021)114}}.

\bibitem{Severa:2001tze}
Pavol Ševera.
\newblock Some title containing the words ``homotopy'' and ``symplectic'',
  e.g.\ this one, May 2001.
\newblock \href {https://arxiv.org/abs/math/0105080}
  {\path{arXiv:math/0105080}}, \href
  {https://doi.org/10.48550/arXiv.math/0105080}
  {\path{doi:10.48550/arXiv.math/0105080}}.

\bibitem{2007LMaPh..81...19K}
Olga Kravchenko.
\newblock Strongly homotopy {L}ie bialgebras and {L}ie quasi-bialgebras.
\newblock {\em Letters in Mathematical Physics}, 81(1):19--40, July 2007.
\newblock \href {https://arxiv.org/abs/math/0601301}
  {\path{arXiv:math/0601301}}, \href
  {https://doi.org/10.1007/s11005-007-0167-x}
  {\path{doi:10.1007/s11005-007-0167-x}}.

\bibitem{Chakrabarti:2024crx}
Subhroneel Chakrabarti and Renann Lipinski~Jusinskas.
\newblock Perturbative unitarity calls for an action, December 2024.
\newblock \href {https://arxiv.org/abs/2412.07864} {\path{arXiv:2412.07864}}.

\bibitem{Freedman:1980us}
Daniel~Zissel Freedman and Paul~Kingsley Townsend.
\newblock Antisymmetric tensor gauge theories and non-linear \(\sigma\)-models.
\newblock {\em Nuclear Physics B}, 177(2):282--296, January 1981.
\newblock \href {https://doi.org/10.1016/0550-3213(81)90392-8}
  {\path{doi:10.1016/0550-3213(81)90392-8}}.

\bibitem{Buchbinder:2024tui}
Iosif~Lvovich Buchbinder and Sergei~M. Kuzenko.
\newblock Quantum equivalence of the {F}reedman-{T}ownsend model and the
  principal chiral $\sigma$-model, May 2024.
\newblock \href {https://arxiv.org/abs/2405.16782} {\path{arXiv:2405.16782}},
  \href {https://doi.org/10.48550/arXiv.2405.16782}
  {\path{doi:10.48550/arXiv.2405.16782}}.

\bibitem{Broccoli:2021pvv}
Matteo Broccoli, Nihat~Sadik Değer, and Stefan~Josef Theisen.
\newblock Third way to interacting \(p\)-form theories.
\newblock {\em Physical Review Letters}, 127(9):091603, 2021.
\newblock \href {https://arxiv.org/abs/2103.13243} {\path{arXiv:2103.13243}},
  \href {https://doi.org/10.1103/PhysRevLett.127.091603}
  {\path{doi:10.1103/PhysRevLett.127.091603}}.

\bibitem{Ikeda:1993fh}
Noriaki Ikeda.
\newblock Two-dimensional gravity and nonlinear gauge theory.
\newblock {\em Annals of Physics}, 235:435--464, 1994.
\newblock \href {https://arxiv.org/abs/hep-th/9312059}
  {\path{arXiv:hep-th/9312059}}, \href {https://doi.org/10.1006/aphy.1994.1104}
  {\path{doi:10.1006/aphy.1994.1104}}.

\bibitem{Schaller:1994es}
Peter Schaller and Thomas Strobl.
\newblock Poisson structure induced (topological) field theories.
\newblock {\em Modern Physics Letters A}, 9:3129--3136, 1994.
\newblock \href {https://arxiv.org/abs/hep-th/9405110}
  {\path{arXiv:hep-th/9405110}}, \href
  {https://doi.org/10.1142/S0217732394002951}
  {\path{doi:10.1142/S0217732394002951}}.

\bibitem{Schaller:1994uj}
Peter Schaller and Thomas Strobl.
\newblock {P}oisson sigma models: A generalization of 2-d gravity
  {Y}ang-{M}ills systems.
\newblock In Alexei~Norairovich Sissakian and George~S. Pogosyan, editors, {\em
  Finite Dimensional Integrable Systems. Proceedings of the International
  Workshop, Dubna, Russia, July 18--21, 1994}, pages 181--190, 1994.
\newblock \href {https://arxiv.org/abs/hep-th/9411163}
  {\path{arXiv:hep-th/9411163}}.

\bibitem{Schaller:1995xk}
Peter Schaller and Thomas Strobl.
\newblock A brief introduction to {P}oisson sigma models.
\newblock In Harald Grosse and Ludwig Pittner, editors, {\em Low-Dimensional
  Models in Statistical Physics and Quantum Field Theory. Proceedings of the
  \foreignlanguage{german}{34.~Internationale Universitätswochen für Kern-
  und Teilchenphysik Schladming}, Austria, March 4--11, 1995}, volume 469 of
  {\em Lecture Notes in Physics}, pages 321--333. Springer, November 1996.
\newblock \href {https://arxiv.org/abs/hep-th/9507020}
  {\path{arXiv:hep-th/9507020}}, \href {https://doi.org/10.1007/BFb0102573}
  {\path{doi:10.1007/BFb0102573}}.

\bibitem{contreras}
Iván~Guillermo Contreras~Palacios.
\newblock {\em An Introduction to the {P}oisson Sigma Model}.
\newblock World Scientific, December 2024.
\newblock \href {https://doi.org/10.1142/12597} {\path{doi:10.1142/12597}}.

\bibitem{Dri83a}
\foreignlanguage{russian}{Владимир Гершонович
  Дринфельд}.
\newblock \foreignlanguage{russian}{Гамильтоновы структуры
  на группах Ли, биалгебры Ли и
  геометрический смысл классических
  уравнений Янга--Бакстера}.
\newblock {\em \foreignlanguage{russian}{Доклады Академии
  Наук СССР}}, 268(2):285--287, 1983.
\newblock URL: \url{https://mi.mathnet.ru/dan45858}.

\bibitem{Dri83b}
\foreignlanguage{russian}{Владимир Гершонович
  Дринфельд}.
\newblock \foreignlanguage{russian}{О постоянных
  квазиклассических решениях квантового
  уравнения Янга--Бакстера}.
\newblock {\em \foreignlanguage{russian}{Доклады Академии
  Наук СССР}}, 273(3):531--535, 1983.
\newblock URL: \url{https://mi.mathnet.ru/dan9865}.

\bibitem{Grabowski1995}
Janusz~Roman Grabowski.
\newblock {P}oisson {L}ie groups and their relations to quantum groups.
\newblock {\em Banach Center Publications}, 34:55--64, 1995.
\newblock \href {https://doi.org/10.4064/-34-1-55-64}
  {\path{doi:10.4064/-34-1-55-64}}.

\bibitem{Meusburger:2021cxe}
Catherine Meusburger.
\newblock {P}oisson–{L}ie groups and gauge theory.
\newblock {\em Symmetry}, 13(8):1324, 2021.
\newblock \href {https://doi.org/10.3390/sym13081324}
  {\path{doi:10.3390/sym13081324}}.

\bibitem{Roytenberg:1999mny}
Dmitry Roytenberg.
\newblock {\em {C}ourant algebroids, derived brackets and even symplectic
  supermanifolds}.
\newblock PhD thesis, University of California, Berkeley, May 1999.
\newblock \href {https://arxiv.org/abs/math/9910078}
  {\path{arXiv:math/9910078}}, \href
  {https://doi.org/10.48550/arXiv.math/9910078}
  {\path{doi:10.48550/arXiv.math/9910078}}.

\bibitem{Roytenberg:2002nu}
Dmitry Roytenberg.
\newblock On the structure of graded symplectic supermanifolds and {C}ourant
  algebroids.
\newblock In {\textrm{Th}}eodore~{\textrm{Th}}eodorovich Voronov, editor, {\em
  Quantization, {P}oisson Brackets and Beyond. {L}ondon {M}athematical
  {S}ociety Regional Meeting and Workshop on Quantization, Deformations, and
  New Homological and Categorical Methods in Mathematical Physics, {J}uly
  6--13, 2001, {U}niversity of {M}anchester {I}nstitute of {S}cience and
  {T}echnology, {M}anchester, {U}nited {K}ingdom}, volume 315 of {\em
  Contemporary Mathematics}, pages 169--186. American Mathematical Society,
  March 2002.
\newblock \href {https://arxiv.org/abs/math/0203110}
  {\path{arXiv:math/0203110}}, \href {https://doi.org/10.1090/conm/315/05479}
  {\path{doi:10.1090/conm/315/05479}}.

\end{thebibliography}
\end{document}